\theoremstyle{plain}
\newtheorem{thm}{Theorem}[section]
\theoremstyle{definition}
\theoremstyle{remark}
\newcommand{\average}[1]{\ensuremath{\langle#1\rangle} }
\begin{document}
\baselineskip 18pt
\begin{center}
{\Large {\bf Truncated Wigner function theory of coherent Ising machines based on degenerate optical parametric oscillator network}}
 \vskip5mm Daiki Maruo\footnote{Email address: maruo@nii.ac.jp}$^{,2}$,
Shoko Utsunomiya\footnote{Email address: shoko@nii.ac.jp}, Yoshihisa Yamamoto$^{2,}$\footnote{Email address: yyamamoto@stanford.edu}\vskip3mm \mbox{}%

$^{1}$ Graduate School of Information Science and Technology, The University of Tokyo 7-3-1 Bunkyo-ku, Tokyo, 113-8654

\mbox{}$^{2}$ National Institute of Informatics, 2-1-2 Chiyoda-ku, Tokyo, 102-0076

\mbox{}$^{3}$ E.L. Ginzton Laboratory, Stanford University, Stanford, CA 9430

\bigskip

\begin{abstract}
We present the quantum theory of coherent Ising machines based on networks of degenerate optical parametric oscillators (DOPOs). 
In a simple model consisting of two coupled DOPOs, both positive-$P$ representation and truncated Wigner representation predict quantum correlation and inseparability between the two DOPOs in spite of the open-dissipative nature of the system.  
Here, we apply the truncated Wigner representation method to coherent Ising machines with thermal, vacuum, and squeezed reservoir fields. We find that the probability of finding the ground state of a one-dimensional Ising model increases substantially as a result of reducing excess thermal noise and squeezing the incident vacuum fluctuation on the out-coupling port.
\bigskip
\end{abstract}
\end{center}

\newpage
\section{Introduction}
A degenerate optical parametric oscillator (DOPO) provides a simple and clean experimental platform for investigating various quantum and  coherent effects, such as squeezing \cite{walls1983squeezed}\cite{PhysRevLett.57.2520}, entanglement \cite{PhysRevLett.60.2731}\cite{janousek2009optical}, quantum teleportation \cite{bouwmeester1997experimental}\cite{furusawa1998unconditional}, frequency combs \cite{Wong:10}, coherent feedback control \cite{Crisafulli:13}, and quantum information processing \cite{yokoyama2013ultra}.
\par
We have proposed and demonstrated a novel computing system, called a coherent Ising machine (CIM), based on a network of mutually coupled DOPOs \cite{PhysRevA.88.063853}\cite{marandi2014network}.
Each Ising spin is emulated by the in-phase amplitude $\hat{x}$ of the DOPO, which takes either $0$-phase (up-spin) or $\pi$-phase (down-spin). Note that at a pump rate well above the oscillation threshold, one Ising spin is collectively represented with many photons in each DOPO. 
In a CIM at the threshold pump rate, however, each DOPO has only one or a few photons and preserves a coherent superposition of the binary $0$-phase and $\pi$-phase in spite of its dissipative coupling to external reservoirs \cite{wolinsky1988quantum}. 
''Which-path'' information for the $0$-phase vs. $\pi$-phase is buried within the increased quantum noise in the in-phase (anti-squeezed) amplitude $\hat{x}$, which allows superposition of the two states\cite{krippner1994transient}. 
If two DOPOs are mutually coupled via a common optical path, quantum correlation and entanglement form between them in a broad pumping range below the threshold \cite{Kenta}.
The CIM eventually self-stabilizes one particular ground state above the oscillation threshold via spontaneous symmetry breaking associated with the second order phase transition.
\par
In order to implement a large-scale CIM, we need to prepare $N$ identical DOPOs and connect them with $N^2$ optical coupling paths. In most general case of full and asymmetric connection machines, we need to implement $N^2$ Ising coupling constants $J_{ij} \neq J_{ji}$. This is a daunting task if we consider a problem size of $N=10^{3}$ or larger. A CIM with a fiber ring resonator and optical delay lines, shown in Fig.~\ref{fig:CIM}, has been proposed as a practical means of implementing the concept \cite{haribara2016computational}. 
In this configuration, $N$ independent DOPO pulses are simultaneously generated in a common ring resonator with a pump laser pulse train. The round trip time of the cavity is adjusted $N$ times to the DOPO pulse interval. By providing appropriate delay to the optical coupling pulse and injecting it into the fiber ring resonator at an appropriate timing, the Ising model is mapped to the total photon loss of the DOPO network \cite{PhysRevA.88.063853}\cite{marandi2014network}.   
When the pump laser power is gradually increased from below to above the oscillation threshold, the DOPO network oscillates with the phase configuration having the minimum network loss, and thus, the resulting phase configuration of the spontaneously selected oscillation mode corresponds to the ground state of the given Ising Hamiltonian. In this optical coupling scheme, the amplitudes of each coupling pulse include random noise arising from the pump field and incident vacuum fluctuations from the open port of the output coupler (See Fig.~1). 
By squeezing the vacuum fluctuations along the in-phase amplitude $\hat{x}$ with a phase sensitive amplifier \cite{walls1983squeezed}\cite{PhysRevLett.57.2520} at this open port, couplings can be achieved with reduced perturbations and improved signal-to-noise ratios.
The cost of this non-classical coupling strategy is the noise added to the quadrature-phase amplitude $\hat{p}$ of the internal DOPO pulse; however, such noise does not affect the operation of the CIM, since the quadrature-phase amplitude is deamplified by the internal phase sensitive amplifier and also attenuated by the internal linear loss. The use of the squeezed reservoir field also helps to maintain the coherent superposition of $0-phase$ and $\pi-phase$ states \cite{munro1995transient}.
This is the particular scheme that we will study in this paper.
\par
This paper is organized as follows. In Sec.~II, we describe the c-number stochastic differential equations (CSDEs) derived using the positive-$P$ representation and truncated Wigner representation methods. 
In Sec.~III, we study the quantum correlation and inseparability in the two coupled DOPO systems. We show that the two methods predict the same amount of inseparability between the two DOPOs. 
In Sec.~IV, we examine the behavior of the one-dimensional ring with 16 DOPOs from the viewpoint of the success probability  statistics. Through this analysis, we identify  four computational stages of the CIM: quantum parallel search, quantum filtering, spontaneous symmetry breaking, and quantum-to-classical crossover. 
In Sec.~V, we introduce the realistic model of a CIM composed of  discrete devices, including degenerate optical parametric amplifiers (DOPAs),  fiber, output couplers and injection couplers. 
Finally in Sec.~VI, we demonstrate how reducing excess thermal noise and even squeezing vacuum fluctuations at the out-coupling port improves the probability of finding the ground state of a one-dimensional Ising model with 16 spins.
\newpage
\section{C-number stochastic differential equations for coupled DOPOs}
First, we consider a simple model shown in Fig.~\ref{fig:2DOPO} as a building block for the CIM. 
This model is equivalent to the coherent Ising machine with optical delay line coupling (Fig.~\ref{fig:CIM}) when we set the gain of PSA $G=1$.
Two DOPOs are optically coupled via a common optical path, which is further coupled to external reservoirs. 
The total Hamiltionian of the system is \cite{Kenta}
\begin{equation}
\label{eq:H}
\mathcal{H} = \mathcal{H}_{free} + \mathcal{H}_{int} + \mathcal{H}_{pump} + \mathcal{H}_{mirror} + \mathcal{H}_{SR} \\
\end{equation}
\begin{equation}
\label{eq:H_free}
\mathcal{H}_{free} = \hbar \omega_{s} \sum_{j = 1}^{2} \hat{a}_{sj}^{\dagger} \hat{a}_{sj} + \hbar \omega_{p} \sum_{j = 1}^{2} \hat{a}_{pj}^{\dagger} \hat{a}_{pj} + \hbar \omega_{s} \hat{a}_{c}^{\dagger} \hat{a}_{c} \\
\end{equation}
\begin{equation}
\label{eq:H_int}
\mathcal{H}_{int} = \frac{i\hbar\kappa}{2} \sum_{j = 1}^{2} (\hat{a}_{sj}^{\dagger2} \hat{a}_{pj} - \hat{a}_{pj} ^{\dagger} \hat{a}_{sj}^{2}) \\
\end{equation}
\begin{equation}
\label{eq:H_pump}
\mathcal{H}_{pump} = i \hbar \sum_{j=1}^{2} (\epsilon \hat{a}_{pj}^{\dagger} e^{-i\omega_{d}t} - \epsilon \hat{a}_{pj} e^{i\omega_{d}t}) \\
\end{equation}
\begin{equation}
\label{eq:H_mirror}
\mathcal{H}_{mirror} = i \hbar \zeta (\hat{a}_{c} \hat{a}_{s1}^{\dagger} - \hat{a}_{c}^ {\dagger} \hat{a}_{s1} + \hat{a}_{s2}\hat{a}_{c}^{\dagger}e^{-ik_cz} - \hat{a}_{s2}^{\dagger}\hat{a}_{c}e^{ik_cz}) \\
\end{equation}
\begin{equation}
\label{eq:H_SR}
\mathcal{H}_{SR} = \hbar \sum_{j=1}^{2}( \hat{a}_{sj} \hat{\Gamma}_{Rsj}^{\dagger} + \hat{\Gamma}_{Rsj}\hat{a}_{sj}^{\dagger} + \hat{a}_{pj} \hat{\Gamma}_{Rpj}^{\dagger} + \hat{\Gamma}_{Rpj} \hat{a}_{pj}^{\dagger} ) + \hbar (\hat{a}_{c} \hat{\Gamma}^{\dagger}_{Rc} + \hat{\Gamma}_{Rc} \hat{a}_{c}^{\dagger} ) \\
\end{equation}
where $H_{free}$ is the free field Hamiltonian for the signal, pump, and central coupling field, $H_{int}$ is the parametric interaction Hamiltonian, $H_{pump}$ is the external pumping Hamiltonian, $H_{mirror}$ is the coupling Hamiltonian among the two DOPOs and the central optical path, and $H_{SR}$ is the system (signal, pump and the central coupling field) -reservoir interaction Hamiltonian. 
In eq.~(\ref{eq:H_mirror}), the phase factors of the central coupling mode at the facet of DOPO$_{2}$ are expressed as $\hat{a}_{c}^{\dagger}e^{-ik_cz}$ and $\hat{a}_{c}e^{ik_cz}$, where $k_c$ is the wavenumber of the central coupling mode and $z$ is the central path length. 
The ferromagnetic coupling and anti-ferromagnetic coupling are realized when $e^{ik_cz} = e^{-ik_cz} = 1$ and $e^{ik_cz} = e^{-ik_cz} = -1$, respectively. 
\par 
The standard technique \cite{carmichael1999statistical} allows us to obtain the master equation for the fields in two signal modes, two pump modes, and one central coupling mode. 
We then use the positive-$P$ representation $P(\bm{\alpha},\bm{\beta})$\cite{PPrep} for the five modes to expand the total field density operator $\rho$:
\begin{equation}
\label{eq:OD_op}
\rho = \int P(\bm{\alpha},\bm{\beta}) \frac{\ket{\bm{\alpha}}\bra{\bm{\beta}}}{\braket{\bm{\beta}^*|\bm{\alpha}}}d\bm{\alpha}d\bm{\beta} \\
\end{equation}
where $\bm{\alpha} = (\alpha_{s1},\alpha_{s2},\alpha_{p1},\alpha_{p2},\alpha_{c})^{\rm{T}}$ and $\bm{\beta} = (\beta_{s1},\beta_{s2},\beta_{p1},\beta_{p2},\beta_{c})^{\rm{T}}$ are each expressed in terms of five complex numbers, and $\ket{\bm{\alpha}} = \ket{\alpha_{s1}}\ket{\alpha_{s2}}\ket{\alpha_{p1}}\ket{\alpha_{p2}}\ket{\alpha_{c}}$ and $\ket{\bm{\beta}} = \ket{\beta_{s1}}\ket{\beta_{s2}}\ket{\beta_{p1}}\ket{\beta_{p2}}\ket{\beta_{c}}$ are the multimode coherent states \cite{PhysRev.131.2766}. 
Here, $\alpha_{X}$ and $\beta_{X}$ are statistically independent, but their ensemble averaged excitation amplitudes satisfy $\average{\alpha_{X}} = \average{\beta_{X}^{*}}$, where $X$ denotes $s1, s2, p1, p2,$ and $c$ . 
We substitute (\ref{eq:OD_op}) into the master equation to obtain the Fokker-Planck equation for the distribution $P(\bm{\alpha},\bm{\beta}) $ \cite{PPrep}. The Ito rule governs the correspondence between the Fokker-Planck equation and the complex-number stochastic differential equation (CSDE) \cite{walls2007quantum}.
We can reach a series of CSDEs for the ten c-number variables $(\alpha_{s1},\beta_{s1})$, $(\alpha_{s2},\beta_{s2})$, $(\alpha_{p1},\beta_{p1})$, $(\alpha_{p2},\beta_{p2})$ and $(\alpha_{c},\beta_{c})$\cite{Kenta}:
\begin{eqnarray} \nonumber
d\alpha_{s1} &=& (-\gamma_s \alpha_{s1} + \kappa \alpha_{p1} \beta_{s1} +\zeta \alpha_c)dt + \sqrt{\kappa \alpha_{p1}}dw_{\alpha s1}(t) \\ \nonumber
d\beta_{s1} &=& (-\gamma_s \beta_{s1} + \kappa \beta_{p1} \alpha_{s1} + \zeta \beta_c)dt + \sqrt{\kappa \beta_{p1}}dw_{\beta s1}(t) \\ \nonumber
d\alpha_{s2} &=& (-\gamma_s \alpha_{s2} + \kappa \alpha_{p2} \beta_{s2} - \zeta e^{-ik_cz} \alpha_c)dt + \sqrt{\kappa \alpha_{p2}}dw_{\alpha s2}(t) \\ \nonumber
d\beta_{s2} &=& (-\gamma_s \beta_{s1} + \kappa \beta_{p2} \alpha_{s2} - \zeta e^{ik_cz} \beta_c)dt + \sqrt{\kappa \beta_{p2}}dw_{\beta s2}(t) \\ \nonumber
d\alpha_{p1} &=& (-\gamma_p \alpha_{p1} - \frac{\kappa}{2} \alpha_{s1}^2 + \epsilon )dt \\ \nonumber
d\beta_{p1} &=& (-\gamma_p \beta_{p1} - \frac{\kappa}{2} \beta_{s1}^2 + \epsilon )dt \\ \nonumber
d\alpha_{p2} &=& (-\gamma_p \alpha_{p2} - \frac{\kappa}{2} \alpha_{s2}^2 + \epsilon )dt \\ \nonumber
d\beta_{p2} &=& (-\gamma_p \beta_{p2} - \frac{\kappa}{2} \beta_{s2}^2 + \epsilon )dt \\ \nonumber
d\alpha_c &=& ( -\gamma_c \alpha_{c} - \zeta\alpha_{s1} + \zeta e^{ik_cz} \alpha_{s2} )dt \\
d\beta_c &=& ( -\gamma_c \beta_{c} - \zeta\beta_{s1} + \zeta e^{-ik_cz} \beta_{s2} )dt
\end{eqnarray}
Alternatively, we can expand the field density operator $\rho$ by using the Wigner function $W(\bm{\alpha}) $\cite{walls2007quantum}:
\begin{equation}
\label{eq:OD-op}
\rho = \int e^{\bm{\lambda}^* \bm{\hat{a}} - \bm{\lambda} \bm{\hat{a}}^{\dagger} } \left\{ \int e^{\bm{\lambda} \bm{\alpha}^* - \bm{\lambda}^* \bm{\alpha}} W(\bm{\alpha}) d\bm{\alpha} \right\} d\bm{\lambda}, \\
\end{equation}
where $\bm{\hat{a}} = (\hat{a}_{s1},\hat{a}_{s2},\hat{a}_{p1},\hat{a}_{p2},\hat{a}_c)^{\rm{T}}$ and $\bm{\lambda} = (\lambda_{s1},\lambda_{s2},\lambda_{p1},\lambda_{p2},\lambda_{c})$. 
$\bm{\alpha}$ and $\bm{\lambda}$ form a pair of complex numbers related by the Fourier transform: $\chi(\bm{\lambda}) = \int e^{\bm{\lambda} \bm{\alpha}^* - \bm{\lambda}^* \bm{\alpha}} W(\bm{\alpha}) d\bm{\alpha} $, where $\chi({\bm{\lambda})}$ is the symmetric correlation function \cite{walls2007quantum}. 
The resulting Fokker-Planck equation with the third and higher-order terms truncated gives another set of CSDEs:
\begin{eqnarray}
\label{eq:SDE}
\nonumber
d\alpha_{s1} &=& (-\gamma_s \alpha_{s1} + \kappa \alpha_{p1} \alpha_{s1}^{*} +\zeta \alpha_c)dt + \sqrt{\gamma_s}dW_{s1}(t) \\ \nonumber
d\alpha_{s2} &=& (-\gamma_s \alpha_{s2} + \kappa \alpha_{p2} \alpha_{s2}^{*} -\zeta e^{-ik_cz} \alpha_c )dt + \sqrt{\gamma_s}dW_{s2}(t) \\ \nonumber
d\alpha_{p1} &=& (-\gamma_p \alpha_{p1} - \frac{\kappa}{2} \alpha_{s1}^2 + \epsilon )dt + \sqrt{\gamma_p}dW_{p1}(t) \\ \nonumber
d\alpha_{p2} &=& (-\gamma_p \alpha_{p2} - \frac{\kappa}{2} \alpha_{s2}^2 + \epsilon )dt + \sqrt{\gamma_p}dW_{p2}(t) \\ 
d\alpha_c &=& ( -\gamma_c \alpha_{c} - \zeta\alpha_{s1} + \zeta e^{ik_cz} \alpha_{s2} )dt +\sqrt{\gamma_c}dW_{c}(t)
\end{eqnarray}
Here, $dW_{X}(t)$ is the c-number Wiener process and corresponds to the noise term in the equivalent Langevin equations. Next, we assume $\gamma_p,\gamma_c \gg \gamma_s$ and adiabatically eliminate the pump and central coupling modes $(d\alpha_{pj} = d\alpha_{c} = 0)$. We also assume $e^{ik_cz} = e^{-ik_cz} = -1$ (anti-ferromagnetic coupling). 
Finally, we obtain the CSDE for the normalized signal amplitude:
\begin{eqnarray}
\label{eq:SignalSDE}
\nonumber
dA_{s1} &=& \bigl\{ -A_{s1} + (E - A_{s1}^{2})A_{s1}^{*} - \xi A_{s2} \bigr\} d\tau + g dW'_{s1}(\tau)\\
dA_{s2} &=& \bigl\{ -A_{s2} + (E - A_{s2}^{2})A_{s2}^{*} - \xi A_{s1} \bigr\} d\tau + g dW'_{s2}(\tau)
\end{eqnarray}
where $A_{sj} = g\alpha_{sj}$ is the normalized signal amplitude, $g = \frac{\kappa}{\sqrt{2\gamma'_s\gamma_p}}$ is the saturation parameter, $\gamma'_s = \gamma_s + \frac{\zeta^2}{\gamma_c}$ is the effective signal field decay rate, $E = \frac{\kappa}{\gamma'_s\gamma_p}\epsilon$ is the normalized pump rate, $\tau = \gamma'_st$ is the normalized time, and $\xi = \frac{\zeta^2}{\gamma_s\gamma_c + \zeta^2} = \frac{\zeta^2}{\gamma'_s\gamma_c}$ is the normalized effective coupling constant \cite{Kenta}. 
The noise term $dW'_{s1}$ and $dW'_{s2}$ is:
\begin{eqnarray}
\label{eq:SignalSDE_NOISE}
\nonumber
dW'_{s1} &=& \sqrt{\frac{\gamma_s}{\gamma'_s}} dW_{s1}(\tau) + A_{s1}dW_{p1}(\tau) + \sqrt{\xi} dW_{c}(\tau)\\
dW'_{s2} &=& \sqrt{\frac{\gamma_s}{\gamma'_s}} dW_{s2}(\tau) + A_{s2}dW_{p2}(\tau) + \sqrt{\xi} dW_{c}(\tau)
\end{eqnarray}
The equivalent CSDE for the positive-$P$ representation can be found in eqs.~(25) and (26) of ref.~\cite{Kenta}.
\par
We can easily extend this results to the one-dimensional ring network consisting of $N$ DOPOs.
Figure \ref{fig:ring} is the sketch of one-dimensional ring network consisting of $N=16$ DOPOs.
The CSDE of $j$th DOPO constructing one-dimensional ring network is
\begin{eqnarray}
\label{eq:SignalSDE_1Dring}
\nonumber
dA_{sj} &=& \bigl\{ -A_{sj} + (E - A_{sj}^{2})A_{sj}^{*} - \xi A_{sj-1} -\xi A_{sj+1} \bigr\} d\tau + g dW'_{sj}(\tau) \\
dW'_{sj} &=& \sqrt{\frac{\gamma_s}{\gamma'_s}} dW_{sj}(\tau) + A_{sj}dW_{pj}(\tau) + \sqrt{\xi} dW_{cj+1}(\tau) + \sqrt{\xi} dW_{cj}(\tau)
\end{eqnarray}
In the one-dimensional ring network case, we employ the boundary condition as periodic, i.e. $\hat{x}_{s(j+N)} = \hat{x}_{sj}$, $\hat{p}_{s(j+N)} = \hat{p}_{sj}$ and $dW_{c(j+N)} = dW_{cj}$ are satisfied.
\newpage
\section{Quantum correlation and inseparability}
The expectation value of a normally ordered operator is readily evaluated using the positive-$P$ function \cite{PPrep}:
\begin{equation}
\label{eq:OD-op}
\average{\hat{a}_{s1}^{\dagger j} \hat{a}_{s2}^{\dagger k} \hat{a}_{s1}^l \hat{a}_{s2}^m} = \int \beta^{j}_{s1}\beta^{k}_{s2}\alpha^{l}_{s1}\alpha^{m}_{s2} P(\left\{ \alpha \right\},\left\{ \beta \right\})d\bm{\alpha}d\bm{\beta}, \\
\end{equation}
while the expectation value of a symmetrically ordered operator is conveniently evaluated using the truncated Wigner function \cite{walls2007quantum}:
\begin{equation}
\label{eq:SD-op}
\average{\hat{a}_{s1}^{\dagger j} \hat{a}_{s2}^{\dagger k} \hat{a}_{s1}^l \hat{a}_{s2}^m}_{\rm{S}} = \int \alpha^{*j}_{s1}\alpha^{*k}_{s2}\alpha^{l}_{s1}\alpha^{m}_{s2} W(\left\{ \alpha \right\})d\bm{\alpha}. \\
\end{equation}
Here, $\left\{ \alpha \right\} = (\alpha_{s1}, \alpha_{s2})^{\rm{T}}$ and $\left\{ \beta \right\} = (\beta_{s1}, \beta_{s2})^{\rm{T}}$ for the case of two coupled DOPOs.
The correlation function between the two DOPOs for the in-phase amplitude $\hat{x} = (\hat{a} + \hat{a}^{\dagger})/2$ and quadrature-phase amplitude $\hat{p} = (\hat{a} - \hat{a^{\dagger}})/(2i)$ is defined as
\begin{eqnarray}
\nonumber
\label{eq:COR}
\mathrm{C}(\hat{x}_{s1},\hat{x}_{s2}) &=& \frac{\average{\hat{x}_{s1} \hat{x}_{s2}}}{\average{\Delta x_{s1} \Delta x_{s2} }} = \frac{\average{c_{s1}c_{s2}}}{ \sqrt{\average{c_{s1}^2} - \average{c_{s1}}^2} \sqrt{\average{c_{s2}^2} - \average{c_{s2}}^2}} ,\\
\mathrm{C}(\hat{p}_{s1},\hat{p}_{s2}) &=& \frac{\average{\hat{p}_{s1} \hat{p}_{s2}}}{\average{\Delta p_{s1} \Delta p_{s2}} } = \frac{\average{s_{s1}s_{s2}}}{ \sqrt{\average{s_{s1}^2} - \average{s_{s1}}^2} \sqrt{\average{s_{s2}^2} - \average{s_{s2}}^2}} ,
\end{eqnarray}
where $c_{X} = (\alpha_{X} + \alpha_{X}^{*})/2$, $s_{X} = (\alpha_{X} - \alpha_{X}^{*})/(2i)$, and $\Delta O = \sqrt{\average{\hat{O}^2} - \average{\hat{O}}^2}$ for a general operator $\hat{O}$. 
We will use the EPR-type operators $\hat{u}_{+} = \hat{x}_{s1} + \hat{x}_{s2}$ and $\hat{v}_{-} = \hat{p}_{s1} - \hat{p}_{s2}$ to evaluate the quantum correlation and entanglement. 
We assume that the two DOPOs are coupled with the anti-ferromagnetic phase, i.e.~$e^{ik_cz} = e^{-ik_cz} = -1$, so that we expect that $\hat{x}_{s1}$ and $\hat{x}_{s2}$ are negatively correlated, while $\hat{p}_{s1}$ and $\hat{p}_{s2}$ are positively correlated. 
The condition for negative (or positive) quantum correlation is given by $\average{\Delta\hat{u}_{+}^2} < 0.5$ or $\average{\Delta\hat{v}_{-}^2} < 0.5$, while the criterion for inseparability is given by $\average{\Delta\hat{u}_{+}^2} + \average{\Delta\hat{v}_{-}^2} < 1 $\cite{PhysRevLett.84.2722}.
Figure~\ref{fig:ent} compares the total variances of the EPR-type operator, computed by the positive-$P$ representation and by the truncated Wigner representation. 
Here, the pump rate gradually and linearly increases from zero to 1.5 times the oscillation threshold over time $\tau = 200$, i.e.~$E = 1.5(\tau/200)$. The saturation parameter is $g = 0.01$. 
As can be seen in Fig.~\ref{fig:ent}, the two coupled DOPOs feature inseparability, i.e., $\average{\Delta\hat{u}_{+}^2} + \average{\Delta\hat{v}_{-}^2} \leq 1$, when the system evolves from below to above the oscillation threshold. 
Note that the coupled DOPO threshold pump rate is given by $E_{th} = 1-\xi$ rather than $E^{(0)}_{th} = 1$ for a solitary (uncoupled) DOPO \cite{PhysRevA.88.063853}. 
Increasing the coupling constant $\xi$ enhances the inseparability. As expected from the previous study for a solitary DOPO \cite{drummond2002critical}\cite{chaturvedi2002limits}\cite{dechoum2004critical}, the results obtained using the positive-$P$ representation are indistinguishable from those of the truncated Wigner representation. 
Our numerical simulation confirms that the difference in the total variances, $\average {\Delta \hat{u}_{+}^2} + \average{\Delta \hat{v}_{-}^2}$, evaluated using the positive-$P$ representation and the truncated Wigner representation is within the statistical error due to the finite number of sample functions $N = 200,000$, which is shown only in Fig.~\ref{fig:ent}(a) as vertical bars.
There is a variance spike from $\tau = 60$ to $ \tau = 80$ as the parameter is $\xi = 0.6$ in Fig.~\ref{fig:ent}.
We confirmed that the spike is due to the turn-on-delay oscillation effect because the spike disappears with slower gradual pumping.
We show the turn-on delay oscillation effect and the disappearance of the variance spike in Fig.~\ref{fig:noneq}, where the pumping schedule is varied from $E = 1.5(\tau/200),E = 1.5(\tau/400)$ to $E = 1.5(\tau/800)$.
%
%
%
%
%
%
%
%
%
\par
In the one-dimensional ring network case, we can define the operators $\hat{u}_{1D}$ and $\hat{v}_{1D}$ as the indicator of quantum correlation and inseparability if $N$, which is the number of DOPOs, is even.
The mathematical definition of $\hat{u}_{1D}$ and $\hat{v}_{1D}$ is
\begin{equation}
\hat{u}_{1D} = \sum_{j=1}^{N}\hat{x}_{sj}, \hat{v}_{1D} = \sum_{j = 1}^{N}(-1)^{j}\hat{p}_{sj}
\end{equation}
We made a proof below that the operator $\hat{u}_{1D}$ and $\hat{v}_{1D}$ are the indicators of the quantum correlation and inseperability of the system just like between two continuous variables \cite{PhysRevLett.84.2722}.
\begin{thm}
If the system is separable, the inequality $\average{\Delta \hat{u}_{1D}^2} + \average{\Delta \hat{v}_{1D}^2} \geq N/2$ is satisfied.
\end{thm}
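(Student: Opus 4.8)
The plan is to adapt the Duan--Giedke--Cirac--Zoller inseparability argument \cite{PhysRevLett.84.2722} from the two-mode EPR operators to the collective operators $\hat{u}_{1D}$ and $\hat{v}_{1D}$, exploiting the alternating sign $(-1)^j$ so that the cross-mode commutator contributions telescope favorably. First I would recall that separability means the global state is a convex mixture $\rho = \sum_k p_k \, \rho_k^{(1)}\otimes\cdots\otimes\rho_k^{(N)}$ of product states across the $N$ DOPO modes. Because variance is concave under such mixtures (the total variance of a mixture is at least the average of the component variances), it suffices to prove $\average{\Delta \hat{u}_{1D}^2} + \average{\Delta \hat{v}_{1D}^2} \geq N/2$ for an arbitrary product state $\rho_k^{(1)}\otimes\cdots\otimes\rho_k^{(N)}$ and then average; I would isolate this reduction as the first formal step.

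For a product state the modes are statistically independent, so the variance of each collective sum splits into a sum of single-mode variances:
\begin{equation}
\label{eq:varsplit}
\average{\Delta \hat{u}_{1D}^2} = \sum_{j=1}^{N}\average{\Delta \hat{x}_{sj}^2}, \qquad
\average{\Delta \hat{v}_{1D}^2} = \sum_{j=1}^{N}\average{\Delta \hat{p}_{sj}^2},
\end{equation}
where the alternating sign in $\hat{v}_{1D}$ disappears because $(-1)^{2j}=1$ inside each variance. Adding these and regrouping by mode index $j$ gives $\sum_{j=1}^{N}\bigl(\average{\Delta \hat{x}_{sj}^2}+\average{\Delta \hat{p}_{sj}^2}\bigr)$, reducing the whole claim to a per-mode bound.

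The key step is then the single-mode uncertainty relation. With the conventions $\hat{x}=(\hat{a}+\hat{a}^{\dagger})/2$ and $\hat{p}=(\hat{a}-\hat{a}^{\dagger})/(2i)$ stated in the excerpt, the canonical commutator is $[\hat{x}_{sj},\hat{p}_{sj}]=i/2$, so the Heisenberg inequality reads $\average{\Delta \hat{x}_{sj}^2}\,\average{\Delta \hat{p}_{sj}^2}\geq 1/16$, whence by the arithmetic--geometric mean inequality $\average{\Delta \hat{x}_{sj}^2}+\average{\Delta \hat{p}_{sj}^2}\geq 1/2$ for every mode. Summing over the $N$ modes yields the bound $N/2$, and finally averaging over the mixture with weights $p_k$ preserves the inequality since each term is bounded below by the same constant.

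The main obstacle I anticipate is bookkeeping the factor-of-$2$ normalization: the nonstandard definitions $\hat{x}=(\hat{a}+\hat{a}^{\dagger})/2$, $\hat{p}=(\hat{a}-\hat{a}^{\dagger})/(2i)$ shift the commutator to $i/2$ and the vacuum variance to $1/4$, which is precisely what produces the $N/2$ threshold rather than $N$; getting this constant to land correctly is the only delicate point. A secondary subtlety is justifying the concavity-under-mixing step cleanly — one must add and subtract the mean of each component and note that the extra nonnegative term $\sum_k p_k(\langle\hat{O}\rangle_k-\langle\hat{O}\rangle)^2$ only increases the total variance, so the mixture variance dominates the $p_k$-weighted average of component variances. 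Once that inequality is in place, the rest is the routine independence-plus-uncertainty argument above.
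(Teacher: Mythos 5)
Your proposal is correct and follows essentially the same route as the paper's proof: the paper likewise decomposes the separable state as a convex mixture of $N$-fold product states, splits each collective variance into single-mode variances plus mean terms (your concavity-under-mixing step is exactly the paper's Cauchy--Schwarz inequality on the means), and closes with the per-mode uncertainty bound $\average{\Delta\hat{x}_{sj}^2}+\average{\Delta\hat{p}_{sj}^2}\geq 1/2$ summed over the $N$ modes. Your version merely isolates the reduction to product states as a separate first step and spells out the commutator bookkeeping more explicitly, which the paper leaves terse.
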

\begin{proof}
The left-hand side of inequality, $\average{\Delta \hat{u}^2_{1D}} + \average{\Delta \hat{v}^2_{1D}}$ can be written as 
\begin{equation}
\label{eq:ineq_1D}
\average{\Delta \hat{u}^2_{1D}} + \average{\Delta \hat{v}^2_{1D}} = \mathrm{Tr}[\hat{\rho} \hat{u}^2_{1D}] - (\mathrm{Tr}[\hat{\rho} \hat{u}_{1D}])^2 + \mathrm{Tr}[\hat{\rho} \hat{v}^2_{1D}] - (\mathrm{Tr}[\hat{\rho} \hat{v}_{1D}])^2.
\end{equation}

If the system is separable,  the system density operator $\hat{\rho}$ can be decomposed as the tensor product of density operator of each DOPO $\rho_{jk}$, i.e.~
\begin{equation}
\hat{\rho} = \sum_{k}q_{k} \hat{\rho}_{1k} \otimes \hat{\rho}_{2k}  \otimes \cdots \otimes \hat{\rho}_{Nk} = \sum_{k} q_{k} \prod_{j=1}^{N} \hat{\rho}_{jk}
\end{equation}
Here $q_{k}$ is the mixing probability of each tensor product $\prod_{j=1}^{N} \hat{\rho}_{jk}$ and $\sum_{k}q_{k} = 1$ is satisfied.
Each term of (\ref{eq:ineq_1D}) can be written as
\begin{equation}
\begin{split}
\mathrm{Tr}[\hat{\rho} \hat{u}^2_{1D}]  &= \mathrm{Tr}[\sum_{k} q_{k} \prod_{j=1}^{N} \hat{\rho}_{jk} \hat{u}_{1D}^2] = \mathrm{Tr}[\sum_{k} q_{k} \prod_{j=1}^{N} \hat{\rho}_{jk}  \sum_{j=1}^{N} (\hat{x}_{sj}^2 + 2\sum_{l=j+1}^{N} \hat{x}_{sj}\hat{x}_{sl})  ] \\
& = \sum_{k}q_{k}  \sum_{j = 1}^{N} \{ \average{\hat{x}_{sj}^2}_{k} + 2\sum_{l=j+1}^{N}\average{\hat{x}_{sj} }_{k} \average{\hat{x}_{sl}}_{k} \} \\
&= \sum_{k}q_{k} \sum_{j = 1}^{N} \{ \average{\Delta \hat{x}_{sj}^2}_{k} + \average{\hat{x}_{sj}}_k^2 + 2\sum_{l=j+1}^{N}\average{\hat{x}_{sj}}_{k} \average{\hat{x}_{sl}}_{k} \} \\
&= \sum_{k}q_{k} \sum_{j = 1}^{N} \average{\Delta \hat{x}_{sj}^2}_{k} + \sum_{k}q_{k} \average{\hat{u}_{1D}}^2_{k}\\
\end{split}
\end{equation}
\begin{equation}
\begin{split}
(\mathrm{Tr}[\hat{\rho} \hat{u}_{1D}])^2 &=  (\mathrm{Tr}[\sum_{k} q_{k} \prod_{j=1}^{N} \hat{\rho}_{jk} \hat{u}_{1D}])^2 = ( \sum_{k}q_{k} \average{\hat{u}_{1D}}_{k})^2 \\
\end{split}
\end{equation}
\begin{equation}
\begin{split}
\mathrm{Tr}[\hat{\rho} \hat{v}^2_{1D}]  &= \mathrm{Tr}[\sum_{k} q_{k} \prod_{j=1}^{N} \hat{\rho}_{jk} \hat{v}_{1D}^2] = \mathrm{Tr}[\sum_{k} q_{k} \prod_{j=1}^{N} \hat{\rho}_{jk}  \sum_{j=1}^{N} (\hat{p}_{sj}^2 + 2\sum_{l=j+1}^{N} (-1)^{j+l-2}\hat{p}_{sj}\hat{p}_{sl})  ] \\
& = \sum_{k}q_{k} \sum_{j = 1}^{N} \{\average{\hat{p}_{sj}^2}_{k} + 2\sum_{l=j+1}^{N}(-1)^{j+l-2}\average{\hat{p}_{sj} }_{k} \average{\hat{p}_{sl}}_{k} \} \\
&= \sum_{k}q_{k} \sum_{j = 1}^{N} \{ \average{\Delta \hat{p}_{sj}^2}_{k} + \average{\hat{p}_{sj}}_k^2 + 2\sum_{l=j+1}^{N}(-1)^{j+l-2}\average{\hat{p}_{sj}}_{k} \average{\hat{p}_{sl}}_{k} \} \\
&= \sum_{k}q_{k} \sum_{j = 1}^{N} \average{\Delta \hat{p}_{sj}^2}_{k} + \sum_{k}q_{k}\average{\hat{v}_{1D}}^2_{k} \\
\end{split}
\end{equation}
\begin{equation}
\begin{split}
(\mathrm{Tr}[\hat{\rho} \hat{v}_{1D}])^2 &=  (\mathrm{Tr}[\sum_{k} q_{k} \prod_{j=1}^{N} \hat{\rho}_{jk} \hat{v}_{1D}])^2 = ( \sum_{k}q_{k} \average{\hat{v}_{1D}}_{k})^2
\end{split}
\end{equation}
Here we use the inequalities $\sum_{k}q_{k}\average{\hat{u}_{1D}}^2_{k} = \sum_{k}q_{k}\sum_{k}q_{k} \average{\hat{u}_{1D}}^2_{k} \geq (\sum_{k}q_{k} \average{\hat{u}_{1D}}_{k})^2$ and $\sum_{k}q_{k}\average{\hat{v}_{1D}}_{k}^2 = \sum_{k}q_{k}\sum_{k}q_{k} \average{\hat{v}_{1D}}_{k}^2 \geq (\sum_{k}q_{k} \average{\hat{v}_{1D}}_{k})^2$, which are derived from Cauthy-Schwarz inequality.
Then, $\sum_{k} q_{k} (\average{\Delta \hat{x}^2_{sj}}_{k} + \average{\Delta \hat{p}^2_{sj}}_{k}) \geq 0.5$ is derived from the uncertainty principle.
We can conclude that the inequality $\average{\Delta \hat{u}_{1D}^2}+\average{\Delta \hat{v}_{1D}^2} \geq N/2$ is satisfied if the state is separable.
\end{proof}
\begin{thm}
If the inequality $\average{\Delta \hat{u}_{1D}^2} + \average{\Delta \hat{v}_{1D}^2} < N/2$ is satisfied, the system is inseparable and the quantum correlation exists.
\end{thm}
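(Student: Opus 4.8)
The plan is to recognize that this statement is essentially the logical contrapositive of Theorem~3.1, supplemented by a short observation about how the total variance splits into two pieces. First I would establish inseparability. Theorem~3.1 asserts that separability forces $\average{\Delta \hat{u}_{1D}^2} + \average{\Delta \hat{v}_{1D}^2} \geq N/2$; contraposing this, any state satisfying $\average{\Delta \hat{u}_{1D}^2} + \average{\Delta \hat{v}_{1D}^2} < N/2$ cannot admit a decomposition of the form $\hat{\rho} = \sum_k q_k \prod_{j=1}^N \hat{\rho}_{jk}$, and is therefore inseparable. No new computation is needed here, since all the work was already carried out in proving Theorem~3.1.

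Second, I would show that the violation also certifies quantum correlation in the sense used earlier for the two-DOPO case, namely that one of the collective quadratures is squeezed below its vacuum (standard-quantum-limit) value. With the convention $\hat{x} = (\hat{a}+\hat{a}^{\dagger})/2$ and $\hat{p} = (\hat{a}-\hat{a}^{\dagger})/(2i)$, each mode in vacuum has $\average{\Delta \hat{x}_{sj}^2} = \average{\Delta \hat{p}_{sj}^2} = 1/4$, so that for $N$ uncorrelated vacuum modes $\average{\Delta \hat{u}_{1D}^2} = \average{\Delta \hat{v}_{1D}^2} = N/4$. The key elementary step is then the observation that if $\average{\Delta \hat{u}_{1D}^2} + \average{\Delta \hat{v}_{1D}^2} < N/2$, at least one of the two variances must drop below $N/4$: were both at least $N/4$, their sum would be at least $N/2$, contradicting the hypothesis. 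Hence either $\average{\Delta \hat{u}_{1D}^2} < N/4$ or $\average{\Delta \hat{v}_{1D}^2} < N/4$, which is precisely the statement that the collective in-phase or quadrature-phase amplitude is squeezed below the vacuum level --- the $N$-mode generalization of the $\average{\Delta \hat{u}_{+}^2} < 0.5$ or $\average{\Delta \hat{v}_{-}^2} < 0.5$ criterion introduced for the two coupled DOPOs. This is what we mean by the existence of quantum correlation.

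I do not anticipate a genuine technical obstacle, since both halves reduce to logical negation and an elementary splitting of the sum. The only point requiring care is stating the reference vacuum level $N/4$ consistently with the factors of $1/2$ in the quadrature definitions, so that the separability threshold $N/2$ correctly equals the sum of the two $N/4$ vacuum variances and the argument dovetails with the uncertainty-principle step $\sum_k q_k(\average{\Delta \hat{x}_{sj}^2}_k + \average{\Delta \hat{p}_{sj}^2}_k) \geq 0.5$ used in Theorem~3.1. I would therefore make this normalization explicit at the outset to keep the two theorems manifestly consistent.
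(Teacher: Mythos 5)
Your proposal is correct and takes the same route as the paper, whose entire proof of this theorem is the single line ``It is the contraposition of Theorem 3.1.'' Your additional observation --- that $\average{\Delta \hat{u}_{1D}^2} + \average{\Delta \hat{v}_{1D}^2} < N/2$ forces at least one of the two variances below the vacuum level $N/4$ --- is a small but worthwhile supplement: the bare contrapositive only delivers inseparability, and the paper leaves the ``quantum correlation exists'' clause unjustified, whereas your pigeonhole step establishes it in exactly the sub-vacuum sense ($\average{\Delta \hat{u}_{1D}^2} < N/4$ or $\average{\Delta \hat{v}_{1D}^2} < N/4$) that the paper itself invokes when discussing the $N=16$ ring-network results.
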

\begin{proof}
It is the contraposition of Theorem 3.1.
\end{proof}
Figure~\ref{fig:insep} shows the total variances of the EPR-like  operator $\hat{u}_{1D} + \hat{v}_{1D}$ for various squeezing parameters $r$ of the input states into the output coupler in the $N=16$ one-dimensional ring network.
Here, the pump rate gradually increases from zero to 0.375 times the oscillation threshold over time $\tau = 200$, i.e.~$E = 0.375(\tau/200)$. The saturation parameter is $g = 0.01$ and the coupling constant $\xi = 0.4$. 
If a standard vacuum fluctation is incident on the output coupler, the quantum correlation exists in the quadrature-phase amplitudes $( \average{\Delta \hat{v}_{1D}^2} < N/4)$ but only classical correlation exists in the in-phase amplitudes$( \average{\Delta \hat{u}_{1D}^2} \ge N/4)$ \cite{Kenta}. 
On the other hand,
if we inject squeezed vacuum states with reduced quantum noise, $e^{-2r}/4$, in the in-phase amplitude and enhanced quantum noise, $e^{2r}/4$, in the quadrature -phase amplitude, the quantum correlation exists in both in-phase and quadrature-phase amplitudes, and is boosted with increasing the squeezing parameter.
The variance of $\hat{u}_{1D} + \hat{v}_{1D}$ is below the $ N/2=8 $, which is the criteria of inseperability and holds even without squeezing.
\par
In Fig.~\ref{fig:ent}-6, we assume relatively lange coupling constants $\xi(=0.4-0.995)$. Such a strong coupling is not unrealistic if we amplify the out-coupled field with a noiseless PSA as shown in Fig1.
\newpage
\section{Quantum parallel search, quantum filtering, spontaneous symmetry breaking and quantum-to-classical crossover}
When the normalized pump rate for each DOPO is increased linearly in time according to $E = 1.5(\tau/200)$, the inseparability forms over a wide pumping range from below to above the oscillation threshold, as shown in Fig.~\ref{fig:ent}. 
The static threshold for the coupled DOPOs is equal to $E_{th} = 1 - \xi = 0.4$, when $\xi = 0.6$, which corresponds to the normalized time $\tau \simeq 53$. 
The coupled DOPOs actually oscillate, however, at $\tau \simeq 80$  because of the turn-on delay effect. 
The inseparability emerges well below the dynamic oscillation threshold and weakens toward the dynamic threshold. Figures~\ref{fig:bar}(a), (b), (c), and (d) show the post-selected probabilities of obtaining the measurement results for four possible spin configurations, i.e., $\ket{\uparrow \uparrow}$, $\ket{\uparrow \downarrow}$, $\ket{\downarrow \uparrow}$ and $\ket{\downarrow \downarrow}$, at a specific time $\tau$ (or normalized pump rate $E$). 
The post-selection is performed under the condition that the final state is $\ket{\uparrow \downarrow}$ . At $\tau = 5$ $(E=0.0375)$, immediately after the pump power is switched on, where the average photon number of each DOPO is $\average{n} \simeq 0.01$, almost identical probabilities of $1/4$ are found for all four spin configurations. This result suggests the uncorrelated product state, $\frac{1}{\sqrt{2}}(\ket{\uparrow} + \ket{\downarrow})_{1} \bigotimes \frac{1}{\sqrt{2}}(\ket{\uparrow} + \ket{\downarrow})_{2}$, where the two DOPOs are independently in linear superposition states of the $0$-phase and $\pi$-phase. 
%
%
Here, the $0$-phase ($\ket{\uparrow}$) and $\pi$-phase ($\ket{\downarrow}$) are not legitimate orthogonal phase states,  but operationally defined by the equation:
\begin{eqnarray}
\label{eq:Gstate}
\nonumber
\ket{\rm{DOPO}} &=& c_{0}\ket{0} + c_{2}\ket{2} + c_{4}\ket{4} + c_{6}\ket{6} + \dotsb \\ \nonumber
&=&\frac{1}{2}(\!c_{0}\ket{0} \!+ \!c_{1}\ket{1} \!+ \!c_{2}\ket{2}\! +\! c_{3}\ket{3}\! +\! \dotsb\!) + \frac{1}{2}(\!c_{0}\ket{0}\! -\! c_{1}\ket{1}\! + \!c_{2}\ket{2}\! - \!c_{3}\ket{3}\! + \!\dotsb\!) \\ 
,
\end{eqnarray}
where the first term of the right hand side of the second line corresponds to $\ket{\uparrow}$ state while the second term corresponds to $\ket{\downarrow}$ state.
The almost equal probability of $1/4$ for all possible spin configurations imply that the system is prepared in a superposition of all possible states and has already started a ''quantum parallel search'' at this early stage. At $\tau = 20$ $(E = 0.15)$, where the average photon number of each DOPO is $\average{n} \simeq 0.5$, the probabilities of finding the two degenerate ground states $\ket{\uparrow \downarrow}$ and $\ket{\downarrow \uparrow}$ are already higher than those for the excited states $\ket{\uparrow \uparrow}$ and $\ket{\downarrow \downarrow}$, as shown in Fig.~\ref{fig:bar}(b). 
Note that the probability amplitudes for all possible spin states probe the network connection and amplify/deamplify the probability amplitudes of the ground states/excited states, even when the average photon number per DOPO is smaller than one.  
The system evolves from the product state, $\frac{1}{2}(\ket{\uparrow \uparrow} + \ket{\uparrow \downarrow} + \ket{\downarrow \uparrow} + \ket{\downarrow \downarrow})$, to the entangled state, $\frac{1}{\sqrt{2}}(\ket{\uparrow \downarrow} + \ket{\downarrow \uparrow})$, already in this weak excitation regime.  We call this amplification/deamplification process ''quantum filtering''. 
When the pump rate exceeds the oscillation threshold, $E = 0.45 > E_{th}^{0} = 0.4$, where the average photon number is $\average{n} \simeq 2$, the coupled DOPO network selects a particular final state $\ket{\uparrow \downarrow}$ rather than $\ket{\downarrow \uparrow}$ via spontaneous symmetry breaking, as shown in Fig.~\ref{fig:bar}(c). 
The true probabilities of obtaining the two degenerate ground states $\ket{\uparrow \downarrow}$ and $\ket{\downarrow \uparrow}$ are 50-50\%, but the particular result shown in Fig~\ref{fig:bar}(c) is post-selected by the final result of $\ket{\uparrow \downarrow}$. 
Finally, at the dynamic threshold, $\tau = 80$ $(E = 0.6)$, the probability of finding a final result $\ket{\uparrow \downarrow}$ becomes nearly 100\%, as shown in Fig~\ref{fig:bar}(d). 
This final stage, called ''quantum to classical crossover'', is made possible by the collapse of the state due to the large separation between the $0$-phase and $\pi$-phase state and also by the stimulated emission of coherent photons with a particular phase .
\par
Figures \ref{fig:post}(a) and (b) plot the time evolutions of the two probabilities for the selected state $\ket{\uparrow \downarrow}$ and unselected states $\ket{\downarrow \uparrow}$  versus normalized time for different squeezing parameters $r$ for the two input states into the central cavity (see Fig.~\ref{fig:2DOPO}).
The initial increase in the two probabilities at $0 < \tau < 30$ reflects amplification of the two probability amplitudes by ''quantum filtering'', while the subsequent increase and decrease in the probabilities at $30 < \tau < 80$ is an indication of spontaneous symmetry breaking. Finally, the deterministic result surfaces at $\tau \simeq 80$, as a result of ''quantum to classical crossover.''
\par
We extend the same analysis to one-dimensional ring consisting of 16DOPOs, in which we post-selected the trajectory and followed its time evolution toward the specific final result.
The condition of post-selection is that the final state is one of the ground states, $\ket{\uparrow \downarrow \uparrow \downarrow \uparrow \downarrow \uparrow \downarrow \uparrow \downarrow \uparrow \downarrow \uparrow \downarrow \uparrow \downarrow}$.
Figure \ref{fig:searching} plots the time evolutions of the two probabilities for the selected ground state $\ket{\uparrow \downarrow \uparrow \downarrow \uparrow \downarrow \uparrow \downarrow \uparrow \downarrow \uparrow \downarrow \uparrow \downarrow \uparrow \downarrow}$ and unselected ground state $\ket{\downarrow \uparrow \downarrow \uparrow \downarrow \uparrow \downarrow \uparrow \downarrow \uparrow \downarrow \uparrow \downarrow \uparrow \downarrow \uparrow}$ for various squeezing parameter $r$.
We can see each step of the quantum filtering, spontaneous symmetry breaking and quantum-to-classical crossover in Fig.~\ref{fig:searching}.
The equation used for this simulation is given by (\ref{eq:SignalSDE_1Dring}) and the numerical parameters are $\xi = 0.4, E = 0.375(\tau/200)$ and $g = 0.01$.
At $\tau = 200$, the photon number of a DOPO is $\average{n} \simeq$ 2000.
The probability of getting one ground state by a random guess is only $2/2^{16} \simeq 0.00305\%$, while it is amplied to $0.03-0.3\%$ by quantum filterling before the spontaneous symmetry breaking sets in. 
\newpage
\section{Discrete model for CIM with multiple DOPO pulses and optical delay lines}
Suppose that the signal loss in the degenerate optical parametric amplifier (DOPA) and in the fiber ring cavity is negligible compared with the out-coupling loss for the mutual coupling between DOPO pulses in Fig.~1. 
Then the time evolution for the pump and signal fields inside the DOPA can be expressed as the following (truncated-Wigner) CSDE:
\begin{eqnarray}
\label{eq:SDE-OPA}
d\alpha_{p} &=& (\epsilon - \gamma_p\alpha_p - \frac{\kappa}{2}\alpha_s^2)dt + \sqrt{\gamma_p} dW_p(t) \\
d\alpha_{s} &=& \kappa \alpha_s^* \alpha_p dt
\end{eqnarray}
where $\epsilon$ is the external pump rate, $\gamma_p$ is the pump field decay rate, and $dW_p(t)$ is the complex Wiener process \cite{walls2007quantum}. 
We assume that the pump field decay rate $\gamma_p$ is very large so that the pump field dynamics obeys signal field dynamics. Under this slaving principle, the CSDE for the $i$-th DOPO signal pulse in the cavity is expressed as
\begin{equation}
\label{eq:SDE-OPA}
d\alpha_{si,cav} = \frac{\kappa}{\gamma_p}\alpha_{si,cav}^* (\epsilon - \frac{\kappa}{2} \alpha_{si,cav}^2) dt + \frac{\kappa}{\sqrt{\gamma_p}}\alpha_{si,cav}^* dW_{pi}(t)
\end{equation}
where the subscript $i$ designates the $i$-th signal pulse. We can normalize the equation (\ref{eq:SDE-OPA}), as we have done already in (\ref{eq:SignalSDE}):
\begin{equation}
\label{eq:NormalPPLN}
dA_{si,cav} = (E - A^{2}_{si,cav})A_{si,cav} ^{*} dT +\sqrt{2} \mu A_{si,cav}^{*} dW(T)
\end{equation}
where $T = \eta t$ is the round trip number inside the cavity, $\eta$ is the number of round trips per second, $\mu = \frac{\kappa}{\sqrt{2 \gamma_p \eta}}$,$E = \frac{\kappa}{\gamma_p \eta}\epsilon$ and $A_{si,cav} = \mu \alpha_{si,cav}$. 
Here, we should point out the difference between $g$ and $\mu$. 
While the saturation parameter $g$ determines the DOPO threshold and the photon number above the threshold, $\mu$ does not. 
This is because, in the present model, all the losses of the DOPO network depend only on the output coupler for the optical delay lines for the mutual coupling so that $\mu$ can not by itself govern the threshold and the photon number above the threshold.
\par
The out-coupling port for the optical delay lines, shown in Fig.~\ref{fig:CIM}, has the following input-output relation:
\begin{equation}
\label{eq:SDEPBS}
\left[
 \begin{array}{c}
 \alpha_{si,out} \\
 \alpha_{si,cav(t_1 +0)} \\
 \end{array}
 \right]
= \left[
 \begin{array}{cc}
 \sqrt{T_p} & -\sqrt{1-T_p} \\
 \sqrt{1-T_p} & \sqrt{T_p} \\
 \end{array}
 \right]
\left[
 \begin{array}{c}
 \alpha_{si,cav(t_1 -0)} \\
 f_{i} \\
 \end{array}
 \right]
\end{equation}
where $T_p$ is the power transmission coefficient of the output coupler, $\alpha_{si,out}$ is the $i$-th out-coupled signal field, and $f_i$ is the noise field incident from the open port of the out-coupler (Fig~.\ref{fig:CIM}). The noise field $f_i$ is a zero-mean complex-number Gaussian random variable. 
We will consider three cases, i.e., thermal state, vacuum state and squeezed vacuum state, for the input noise field $f_i$. The following phase sensitive amplifier (PSA) in Fig.~\ref{fig:CIM} amplifies the in-phase amplitude of the out-coupled field without any additional noise \cite{Yuen:83}:
\begin{equation}
\label{eq:MSIG}
c_{si,out} = G{\rm Re}(\sqrt{T_p}\alpha_{si,cav} - \sqrt{1-T_p}f_i)
\end{equation}
The optical signal from the PSA preserves all the statistical properties of the out-coupled field at a macroscopic (classical) level, which is needed to be split into multiple delay lines and generate the optical feedback pulse in a coherent state $\ket{\alpha_{FB}}$, where $\alpha_{FB} = \frac{1}{\sqrt{T_i}} \sum_{j}\xi_{ij}\tilde{c}_{sj,out}$. 
Here, $\xi_{ij}$ is the coupling constant from the $j$-th signal pulse to the $i$-th signal pulse, $\tilde{c}_{sj,out} = \frac{c_{sj,out}}{\sqrt{T_p}}$, and $\frac{1}{\sqrt{T_i}}(\gg 1)$ is the overall amplification factor including the PSA gain and the beam splitter loss. 
When the feedback pulse is injected back into the main cavity and combined with the $i$-th signal field circulating inside it, the power transmission coefficient of the injection coupler, shown in Fig.~\ref{fig:CIM}, is set to $T_{i} (\ll 1)$. 
Therefore, the quantum noise of the coherent state $\ket{\alpha_{FB}}$ is nearly completely suppressed when the feedback pulse is combined with the $i$-th signal field:
\begin{equation}
\label{eq:SDEOBS}
\left[
 \begin{array}{c}
 \alpha_{si,cav(t_2 +0)} \\
 \alpha_{si,ref} \\
 \end{array}
 \right]
= \left[
 \begin{array}{cc}
 \sqrt{T_i} & \sqrt{1-T_i} \\
 -\sqrt{1-T_i} & \sqrt{T_i} \\
 \end{array}
 \right]
\left[
 \begin{array}{c}
 \alpha_{FB} \\
 \alpha_{si,cav(t_2-0)} \\
 \end{array}
 \right]
\end{equation}
Thus, the only noise source, which is important in the optical delay line coupling scheme, is the incident noise field $f_i$ from the open port of the out-coupler.
\par
In order to suppress the error induced by the noise field $f_{i}$, we can squeeze the vacuum fluctuation with another phase sensitive amplifier which is not shown in Fig.1. 
Since the Ising Hamiltonian is implemented in the in-phase amplitude $c_{si,cav}$, we only need to suppress the quantum noise of the real part of $f_{i}$ at the cost of the increased quantum noise in the imaginary part of $f_i$. This is the strategy of a quantum nondemolition(QND) (or back action evading:BAE) measurement of the in-phase amplitude of the electromagnetic field \cite{PhysRevLett.62.28}. 
In the following numerical study, the (truncated) Wigner distribution function can be obtained as an ensemble average over many trajectories generated by numerical integration of the CSDE. 
The correlation and success probability are computed from the resulting Wigner distribution function.
\newpage
\section{Numerical results}
We numerically studied a one-dimensional ring configuration consisting of 16 Ising spins with identical anti-ferromagnetic couplings implemented by multiple DOPOs and optical delay lines. 
The external pump rate $\epsilon$ was switched on abruptly at $t = 0$, and the time evolutions of various quantities were evaluated over the  period of 2000 round trips inside the cavity. 
Figure \ref{fig:pn} shows the average photon number $\average{n_{si}}$ vs~. the normalized pump rate $p = \epsilon/\epsilon_{th}$ after 2000 round trips, which is considered to be a steady-state photon number for each pump rate. 
We defined the oscillation threshold pump rate $\epsilon_{th}$ to be the value of $\epsilon$ maximizing $d {\rm log}(\average{n_{si}}) / d {\rm log}(\epsilon) $. 
When a squeezed vacuum state is injected from the open port of the out-coupler, the anti-squeezed component of the squeezed vacuum state carries a finite photon number, which is the reason why there is a finite photon number in the limit of $p = \epsilon/\epsilon_{th} \rightarrow 0$ for a finite squeezing parameter $r \neq 0$.
\par
Figure \ref{fig:dudu}(a) shows the minimum variance $\average{\Delta \hat{u}_{+}^2}$ between two neighboring spins over a period of 2000 round trips after the pump is switched on. 
Each data point is an ensemble average over 40,000 samples. 
Note that the convergence of the valiance is problematic when the pump rate is near the oscillation threshold . 
It is assumed that the spin-spin coupling constant is relatively small($\xi = -0.01$) and the out-coupling coefficient is relatively large($T_p = 0.1$). 
As shown in Fig.~\ref{fig:dudu}(a), a quantum correlation, ($\average{\Delta \hat{u}_{+}^2} < 0.5$), forms between two neighboring spins up to a pump rate of $p < 1.5$ if a strongly squeezed vacuum state with $r=1.2$ is injected. 
Such a coherent Ising machine with squeezed input states forms a transient quantum correlation in the course of the computation. This quantum correlation is established in the entire system. 
On the other hand, if a standard vacuum state is incident on the open port, the correlation remains in the classical regime ($\average{\Delta \hat{u}_{+}^2} \geq 0.5$). Figure \ref{fig:dudu}(b) shows the final variance $\average{\Delta \hat{u}_{+}^2}$ between two neighboring spins after 2000 round trips. 
When the pump rate is $0.1 \leq p < 1.05$, the quantum correlation survives in the entire course of computation. 
At such a low pump rate, the quantum-to-classical crossover is never complete, even after 2000 round trips, and a particular computation result surfaces through the projection property of the detection process. 
\par
Figure~\ref{fig:ans} plots the probability of finding the ground state, which is a one-dimensional anti-ferromagnetic order, versus the normalized pump rate $p$. 
The maximum success probability occurs at a pump rate just above the oscillation threshold for each squeezing parameter from $r = 0$ to $r = 1.2$. 
As expected, it increases as the squeezing parameter increases. 
\par
Figure \ref{fig:ans_TP} plots the probability of finding one of the two degenerate ground states versus the normalized pump rate $p$ for different average thermal photon numbers, $n_{th} = (e^{\frac{\hbar\omega}{k_{B}T}}-1)^{-1}$, in the input reservoir field to the out-coupling port. 
For an optical system with $\omega/(2\pi) =$ 300 THz at room temperature $T = $300 K, the thermal photon number is $n_{th} = 0.02$, and the result is indistinguishable from the ideal DOPO system at absolute zero temperature, i.e., for $n_{th} = 0$ shown in Fig.12. 
However, if the thermal photon number $n_{th} > 1$, the thermal noise effect becomes apparent. 
%
%
%
Figure 14 shows the negative correlation ${\rm C}(\hat{x}_{s1}\hat{x}_{s2})$ defined by (14) vs.~normalized pump rate $p$ for a different thermal photon number $n_{th}$.
The correlation is degraded and eventually vanish as the thermal photon number increases, which is responsible for the decreased success probability with increasing $n_{th}$ shown in Fig.~\ref{fig:ans_TP}.
From those results shown in Fig.~\ref{fig:ans_TP} and Fig.~\ref{fig:cor_TP}, we can conclude the quantum oscillator network operating at standard vacuum fluctuation limit or squeezed vacuum fluctuation limit rather than the classical oscillator network operating at thermal noise limit is the key to the successful performance of the CIM.
\newpage
\section{Conclusion}
We studied the quantum correlation, inseparability and probability of finding the ground state in a one-dimensional ring configuration consisting of identical anti-ferromagnetically coupled Ising spins.
The validation of  the theoretical method based on the truncated Wigner distribution function was checked by comparing the computed inseparability with that obtained by a more rigorous method based on the positive-$P$ representation (off-diagonal coherent state expansion). 
The success probability drops dramatically when the system is subject to large thermal noise, which indicates that the quantum parallel search, quantum filtering, spontaneous symmetry breaking, and quantum-to-classical crossover transiently realized in the network of quantum oscillators play a crucial role in the CIM. 
We also demonstrated that the non-classical read out of the in-phase amplitude $c_{si,cav}$ of the signal field with a squeezed vacuum state input increases the probability of finding the ground state in the case of 1D Ising spins with anti-ferromagnetic coupling. 
Implementation of such a non-classical read out only requires another DOPA with an appropriate pump phase; hence, the proposed scheme can be realized without having to deal with any serious technical challenges.
\clearpage
\begin{figure}[!hbp]
\centering
\includegraphics[clip=true,scale=0.6]{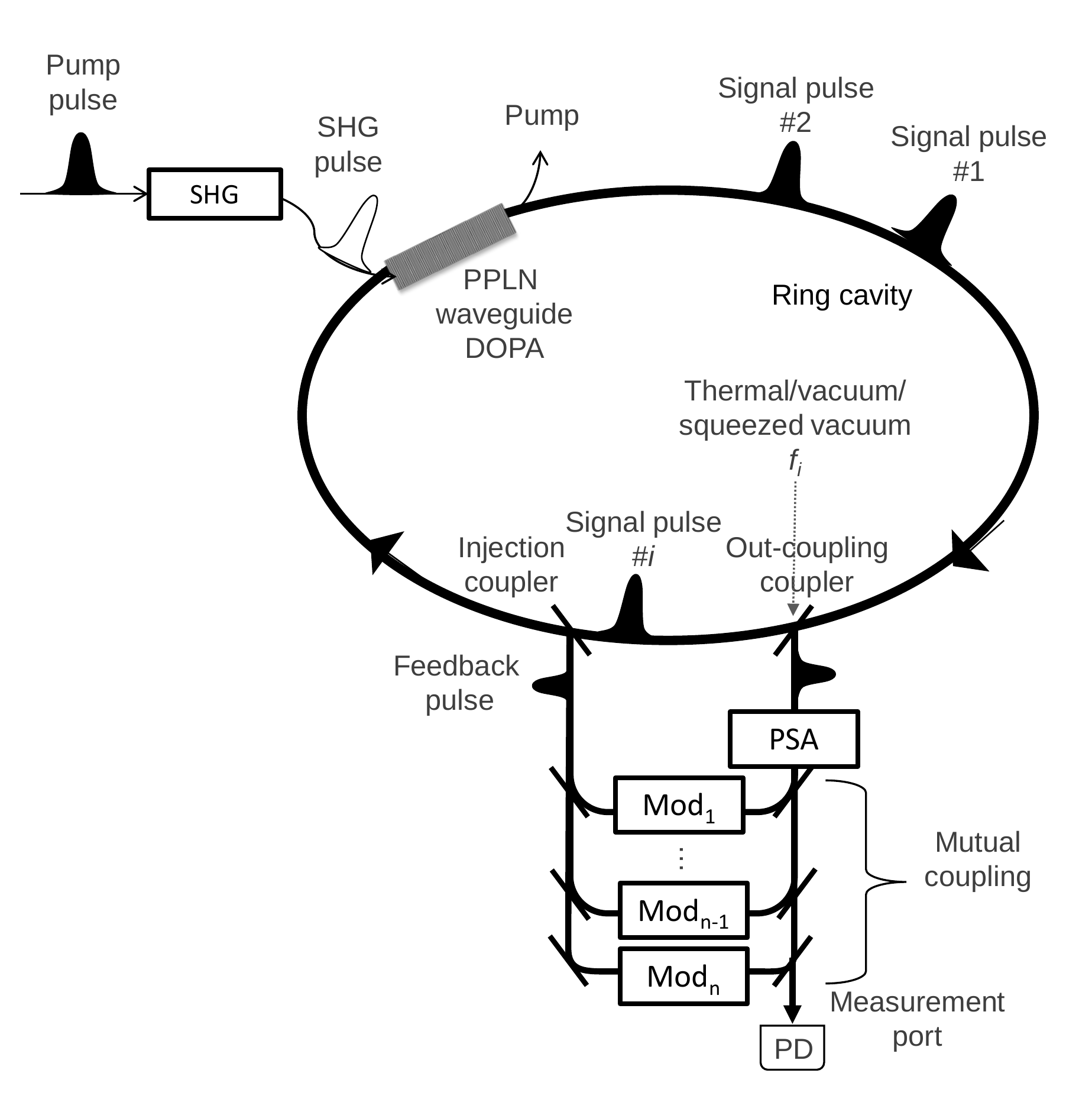}
\bigskip
\caption{A coherent Ising machine with optical delay line coupling. The output coupler followed by the phase sensitive amplifier (PSA: degenerate parametric amplifier) amplifies the in-phase amplitude $\hat{x}$ of each DOPO pulse, while the injection coupler combines the modulated feedback pulse with the target DOPO pulse, which implements the given Ising Hamiltonian. 
The state incident to the output coupler from an open port plays an important role in the behavior of this system and the final success probability of CIM.}
\label{fig:CIM}
\end{figure}

\newpage
\begin{figure}[!hbp]
\centering
\includegraphics[clip=true,scale=0.6]{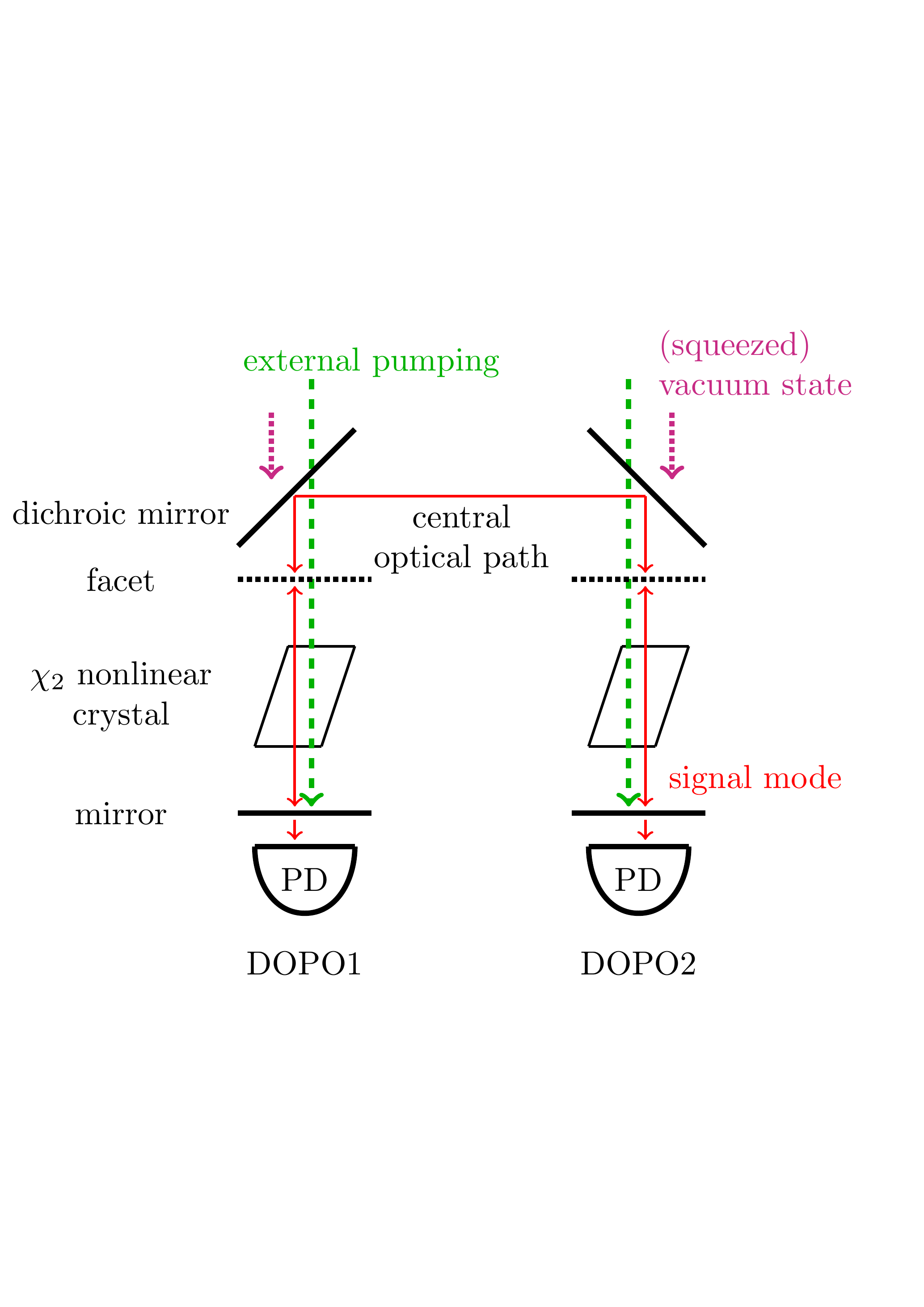}
\caption{Sketch of two DOPOs coupled via a central optical path. Each DOPO consists of the nonlinear crystal (parallelogram) and two mirrors (bold and dotted horizontal lines). The upper ones are the facets. The central optical path is the space between the facets and dichroic mirrors (tilted bold line). Red bold arrows express the signal modes (including the center mode), green dashed arrows express external pumping, and magenta dotted arrows express the incident (squeezed) vacuum state into the central optical path.}
\label{fig:2DOPO}
\end{figure}

\clearpage
\begin{figure}[!hbp]
\centering
\includegraphics[clip=true,scale=0.5]{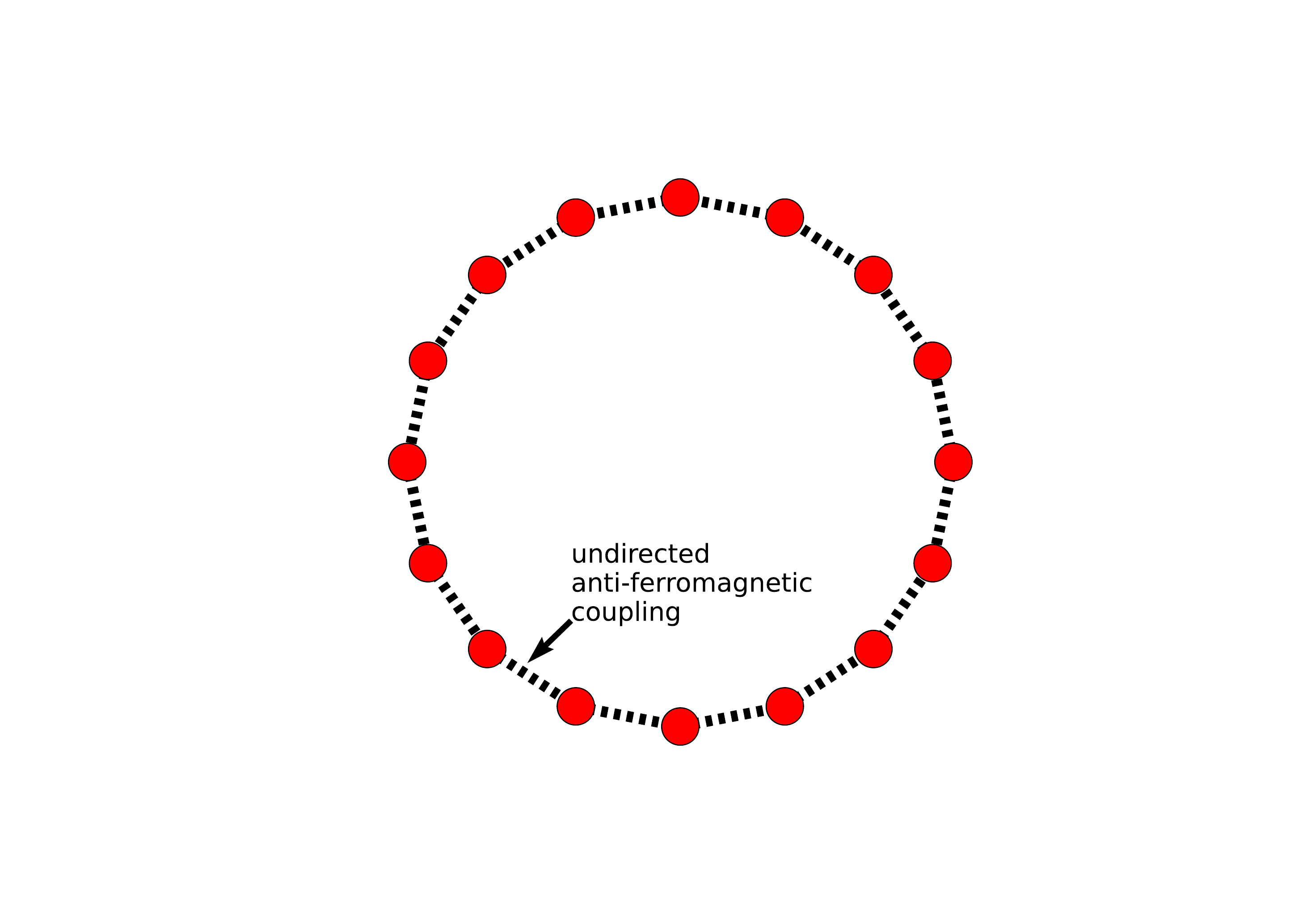}
\bigskip
\caption{Sketch of one-dimensional ring consisting of 16 DOPOs in which 16 Ising spins are coupled with nearest-neighbor identical anti-ferromagnetic couplings.}
\label{fig:ring}
\end{figure}

\clearpage
\begin{figure}[!hbp]
\centering
\includegraphics[clip=true,scale=0.5]{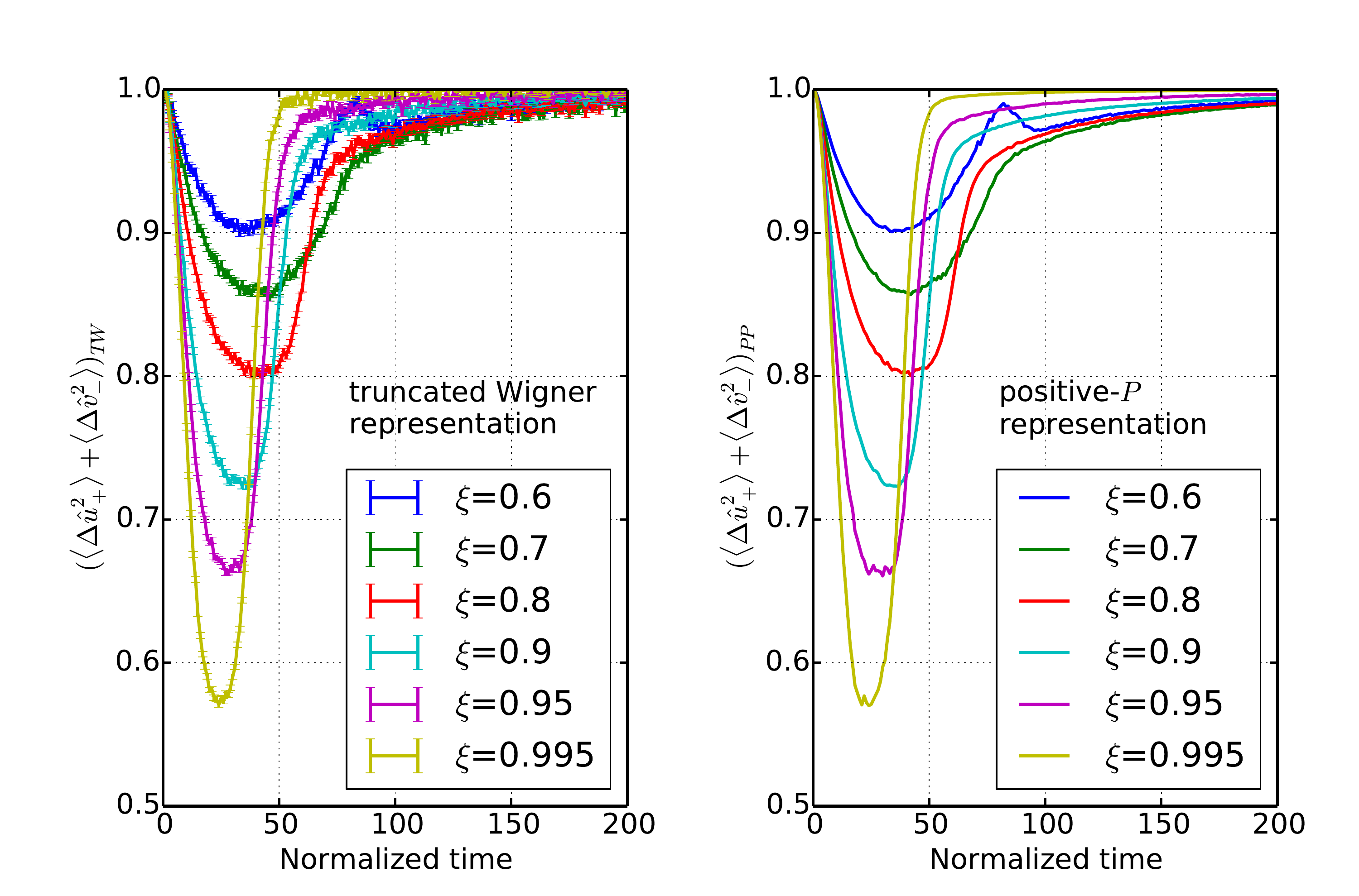}
\bigskip
\caption{Total variance of the EPR operator $\hat{u}_{+} + \hat{v}_{-}$ calculated by the truncated Wigner representation (left panel,(a)) and positive-$P$ representation (right panel,(b)).The statistical error bars due to the finite number of samples $N=200.000$ are only plotted in Fig.~\ref{fig:ent}(a).}
\label{fig:ent}
\end{figure}

\clearpage
\begin{figure}[!hbp]
\centering
\includegraphics[clip=true,scale=0.5]{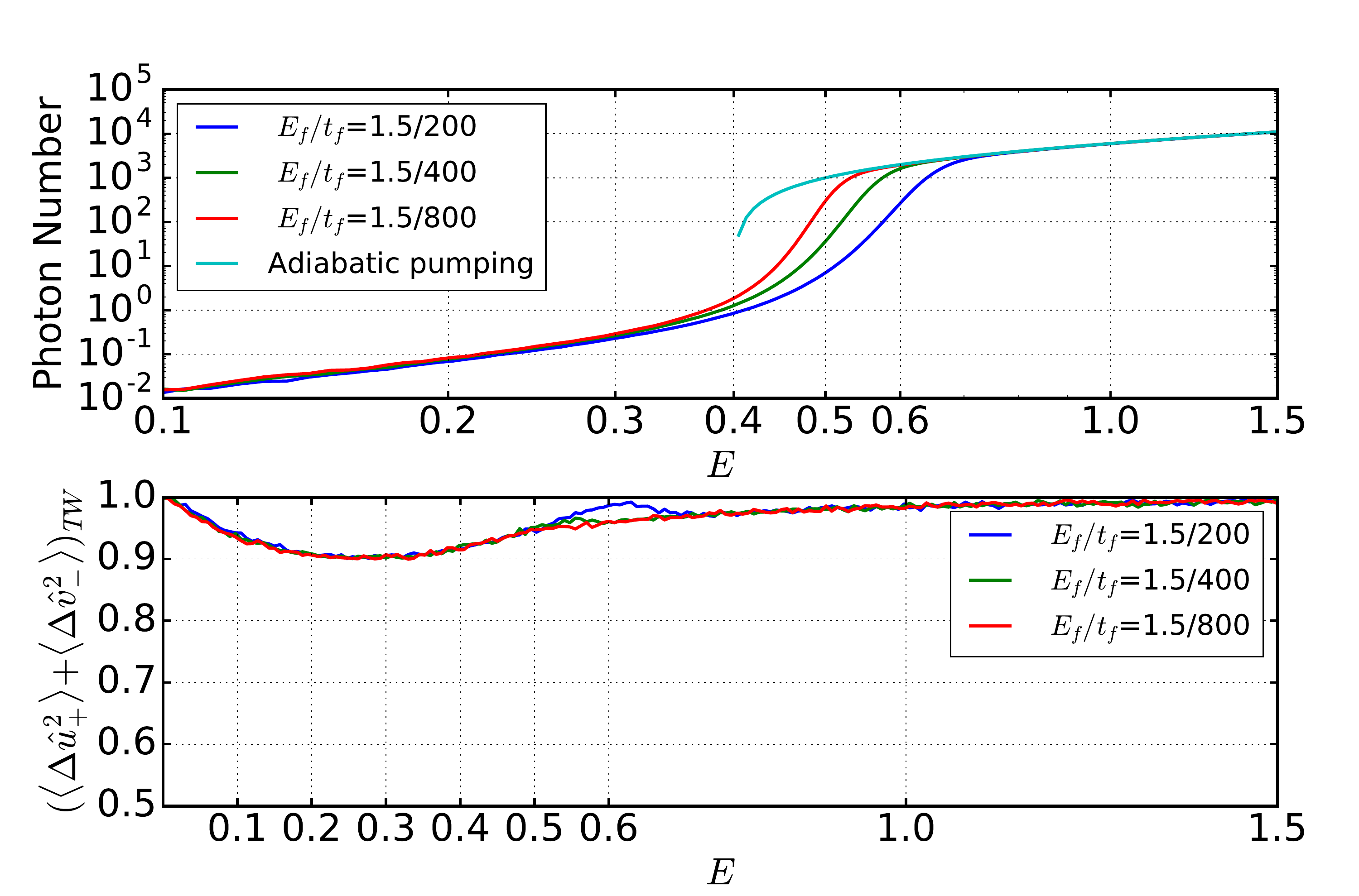}
\bigskip
\caption{The average photon number $\average{n}$ and the variance of the EPR operator $\hat{u}_{+} + \hat{v}_{-}$ vs. normalized pumping rate $E$ for varied pump schedule.}
\label{fig:noneq}
\end{figure}

\clearpage
\begin{figure}[!hbp]
\centering
\includegraphics[clip=true,scale=0.5]{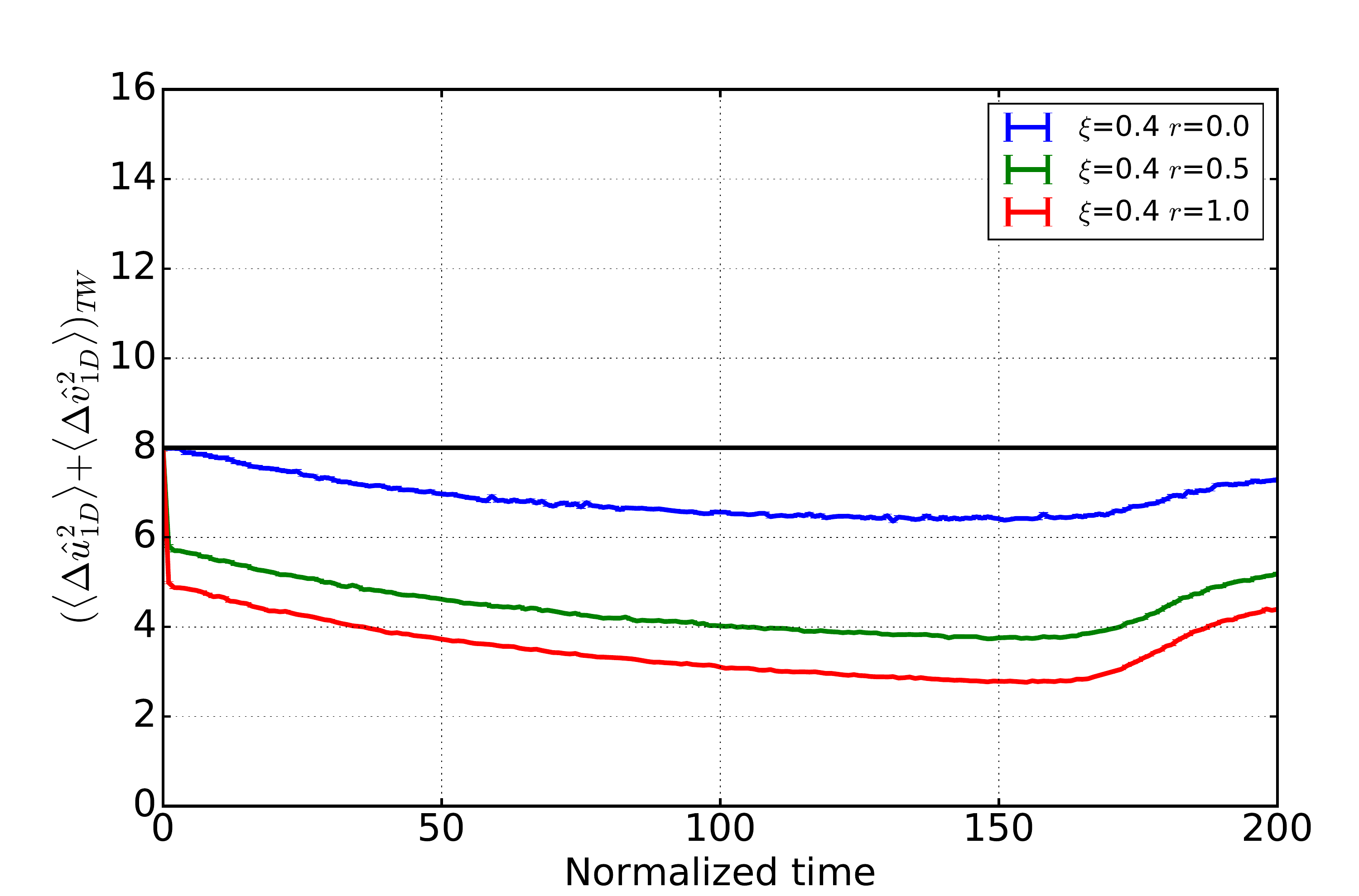}
\bigskip
\caption{The variance of $\hat{u}_{1D} + \hat{v}_{1D}$ vs. normalized time $\tau$ for various squeezing parameters.}
\label{fig:insep}
\end{figure}


\clearpage
\begin{figure}[!hbp]
\centering
\begin{subfigure}{0.48 \textwidth}
\includegraphics[width=\linewidth]{./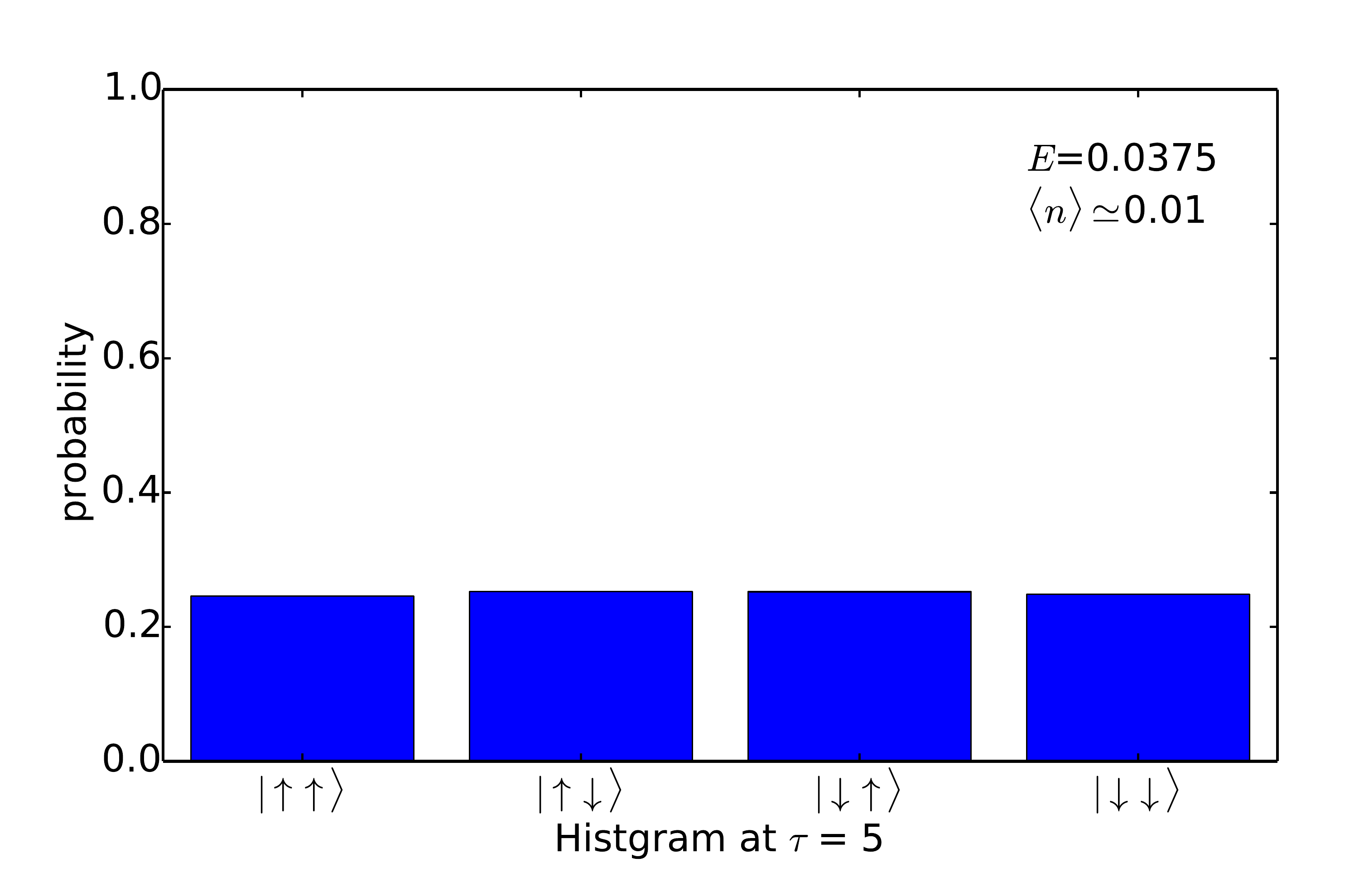}
\caption{} \label{fig:tau_5}
\end{subfigure}
\begin{subfigure}{0.48 \textwidth}
\includegraphics[width=\linewidth]{./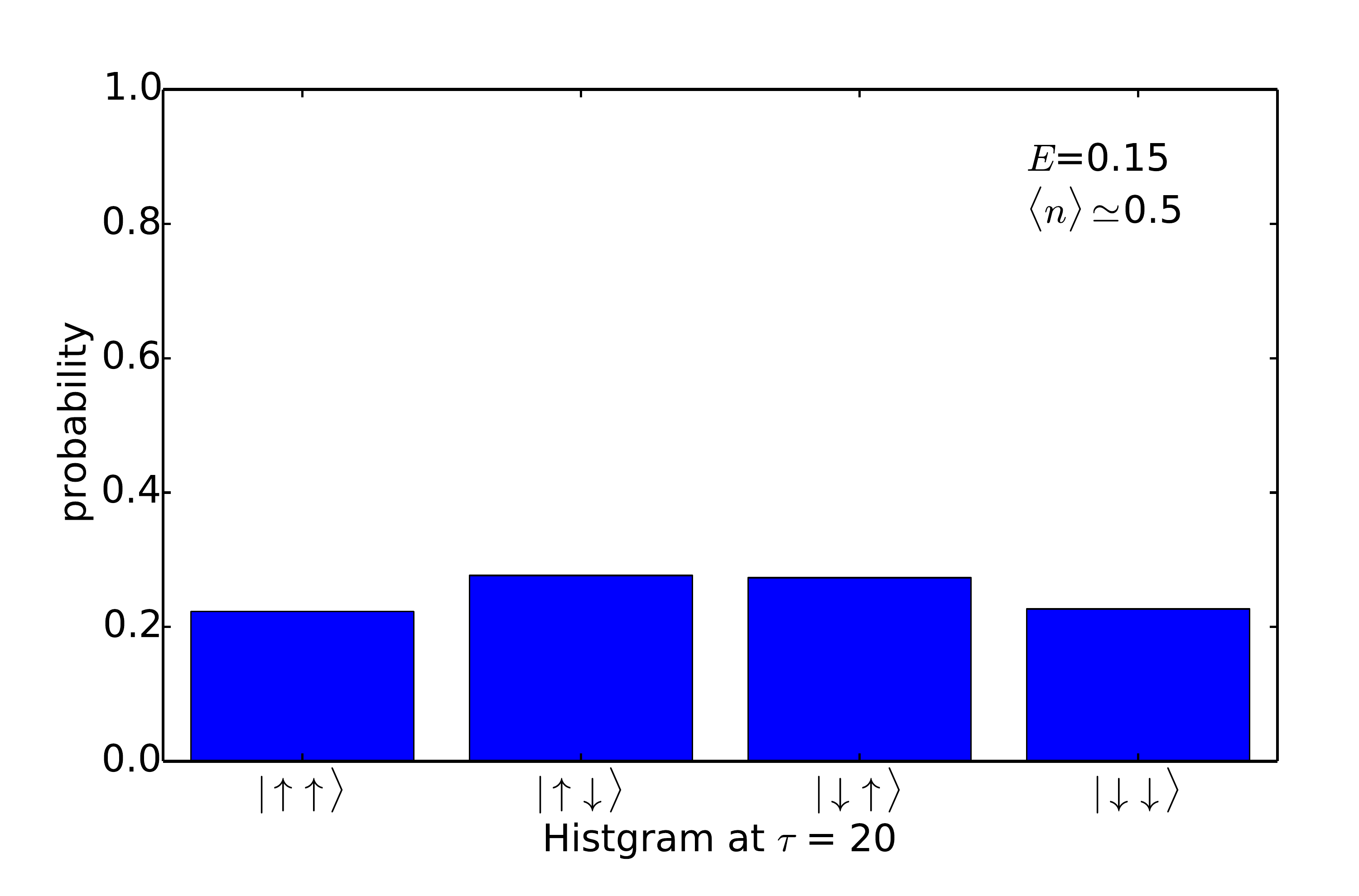}
\caption{} \label{fig:tau_20}
\end{subfigure}
\begin{subfigure}{0.48 \textwidth}
\includegraphics[width=\linewidth]{./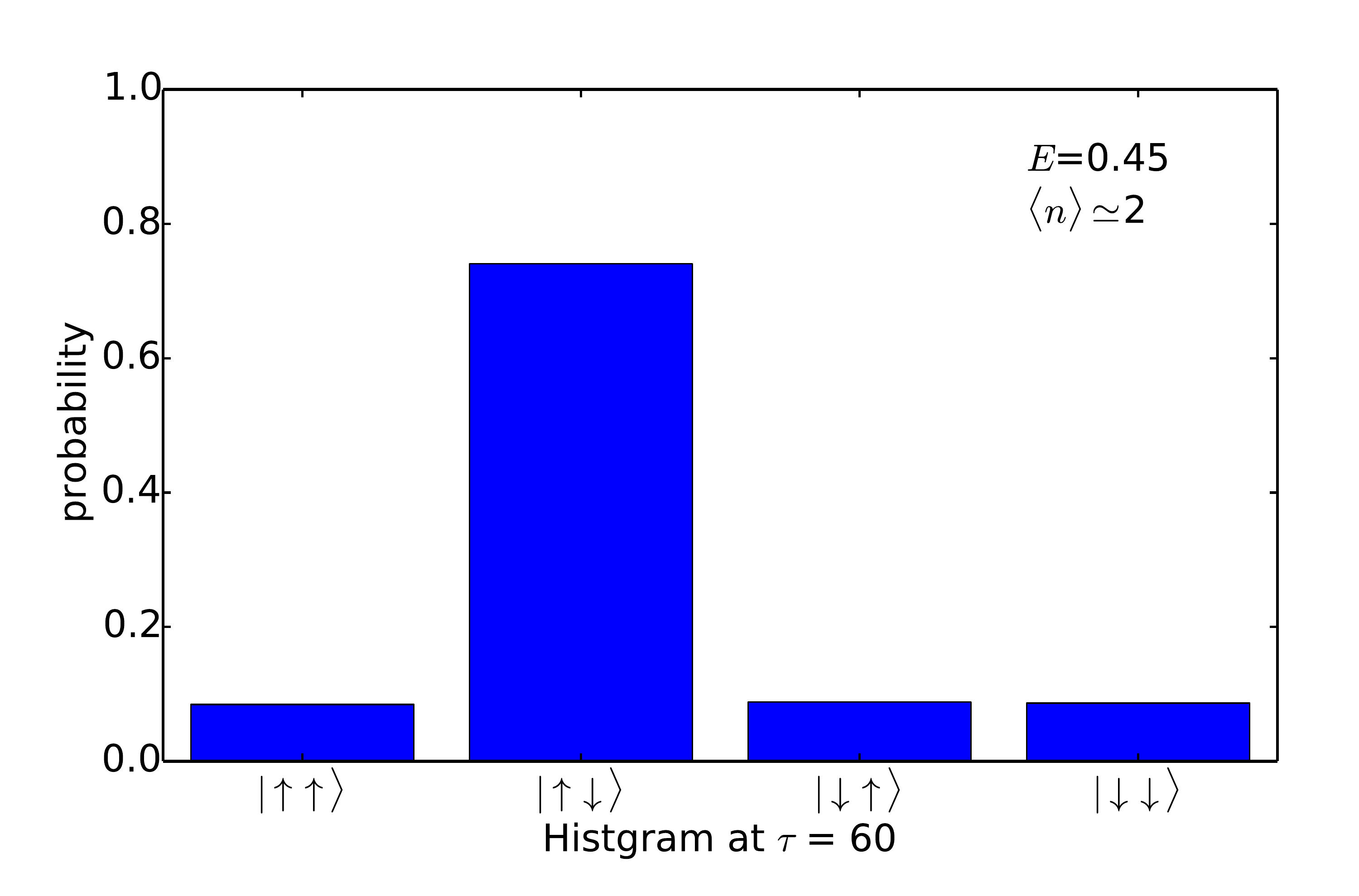}
\caption{} \label{fig:tau_60}
\end{subfigure}
\begin{subfigure}{0.48 \textwidth}
\includegraphics[width=\linewidth]{./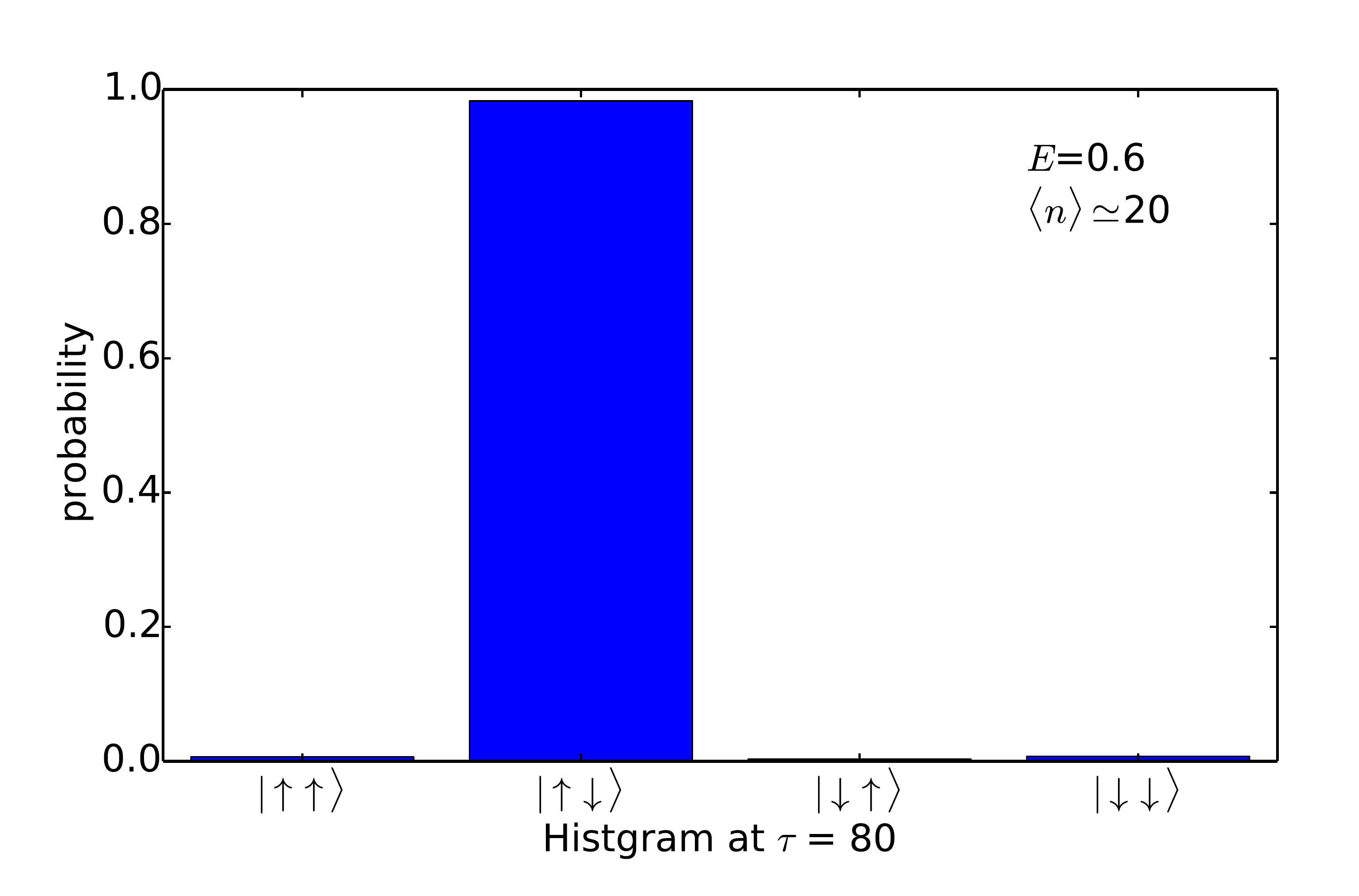}
\caption{} \label{fig:tau_80}
\end{subfigure}
\caption{Probabilities of finding $\ket{\uparrow \uparrow}, \ket{\uparrow \downarrow}, \ket{\downarrow \uparrow}$ and $\ket{\downarrow \downarrow}$ states at four different times $\tau$, pump rates $E$, and average photon numbers $\average{n}$ per DOPO when the final result is $\ket{\uparrow \downarrow}$. The four panels correspond to the computational stages of quantum parallel search, quantum filtering, spontaneous symmetry breaking, and quantum-to-classical crossover.} 
\label{fig:bar}
\end{figure}

\clearpage
\begin{figure}[!hbp]
\centering
\begin{subfigure}{0.8 \textwidth}
\includegraphics[width=\linewidth]{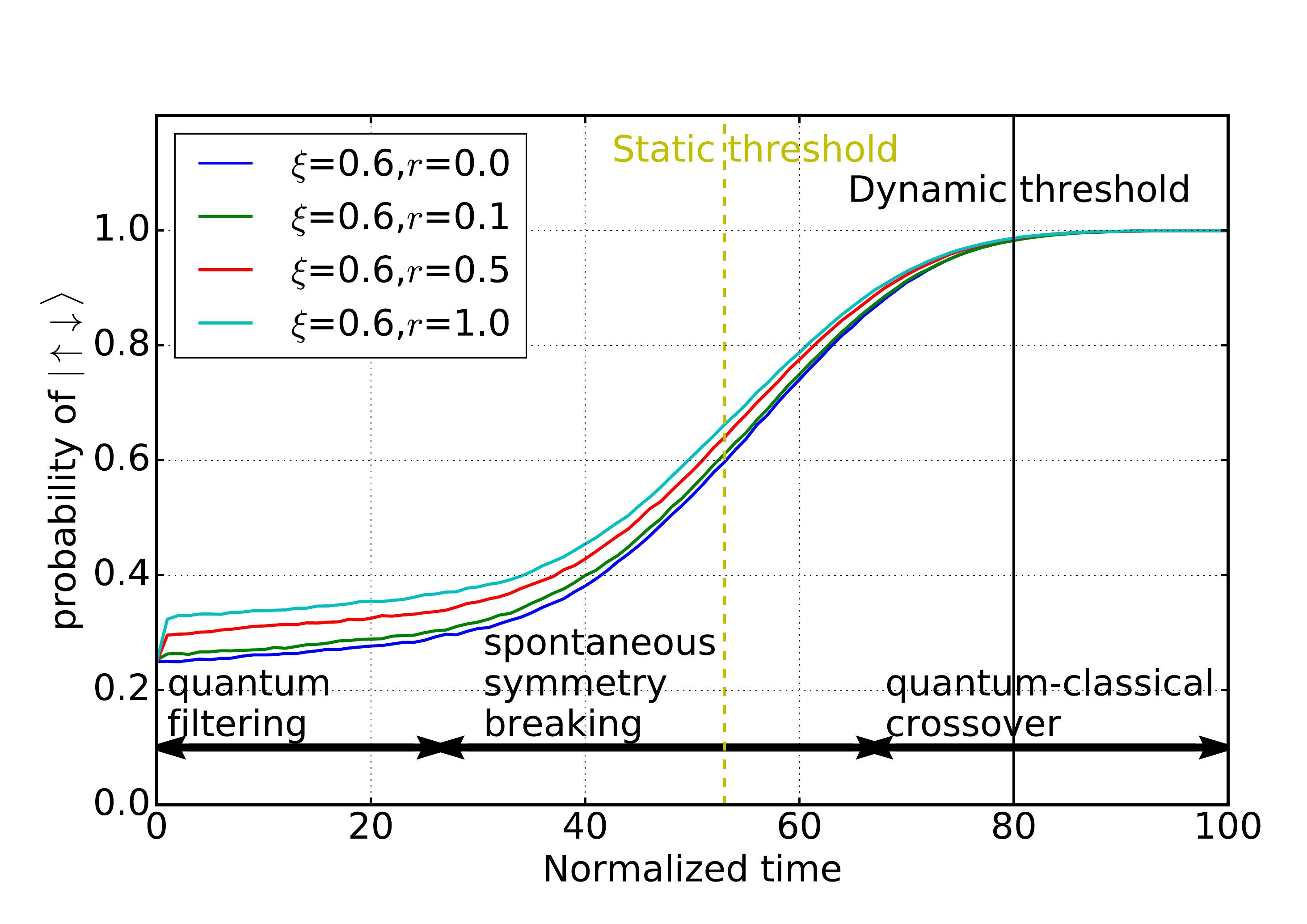}
\caption{} \label{fig:udud}
\end{subfigure}
\begin{subfigure}{0.8 \textwidth}
\includegraphics[width=\linewidth]{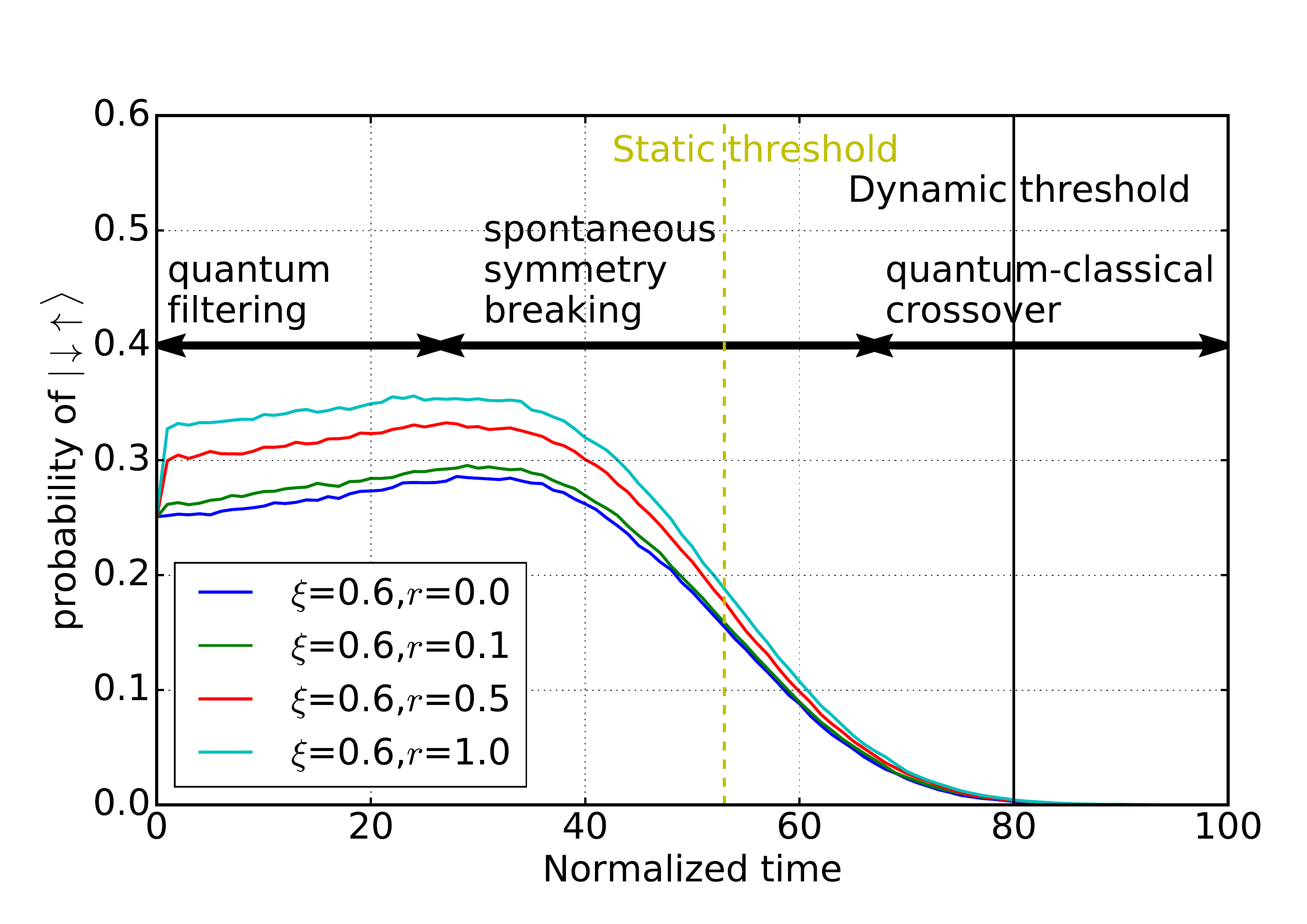}
\caption{} \label{fig:uddu}
\end{subfigure}
\caption{Probabilities of finding the selected state $\ket{\uparrow \downarrow}$ and the unselected state $\ket{\downarrow \uparrow}$ vs. normalized time.  $r$ is the squeezing parameter and the variance of in-phase amplitude noise incident on the output coupler is given by $(1/4)e^{-2r}$, where $r=0$ corresponds to the standard vacuum state. The system reaches the oscillation threshold $E_{th} = 1- \xi = 0.4$ (static threshold) at the normalized time $\tau \simeq 53$. However, due to the turn-on delay effect, the actual oscillation occurs at $\tau \simeq 80$ (dynamic threshold).}
\label{fig:post}
\end{figure}

\clearpage
\begin{figure}[!hbp]
\centering
\includegraphics[clip=true,scale=0.5]{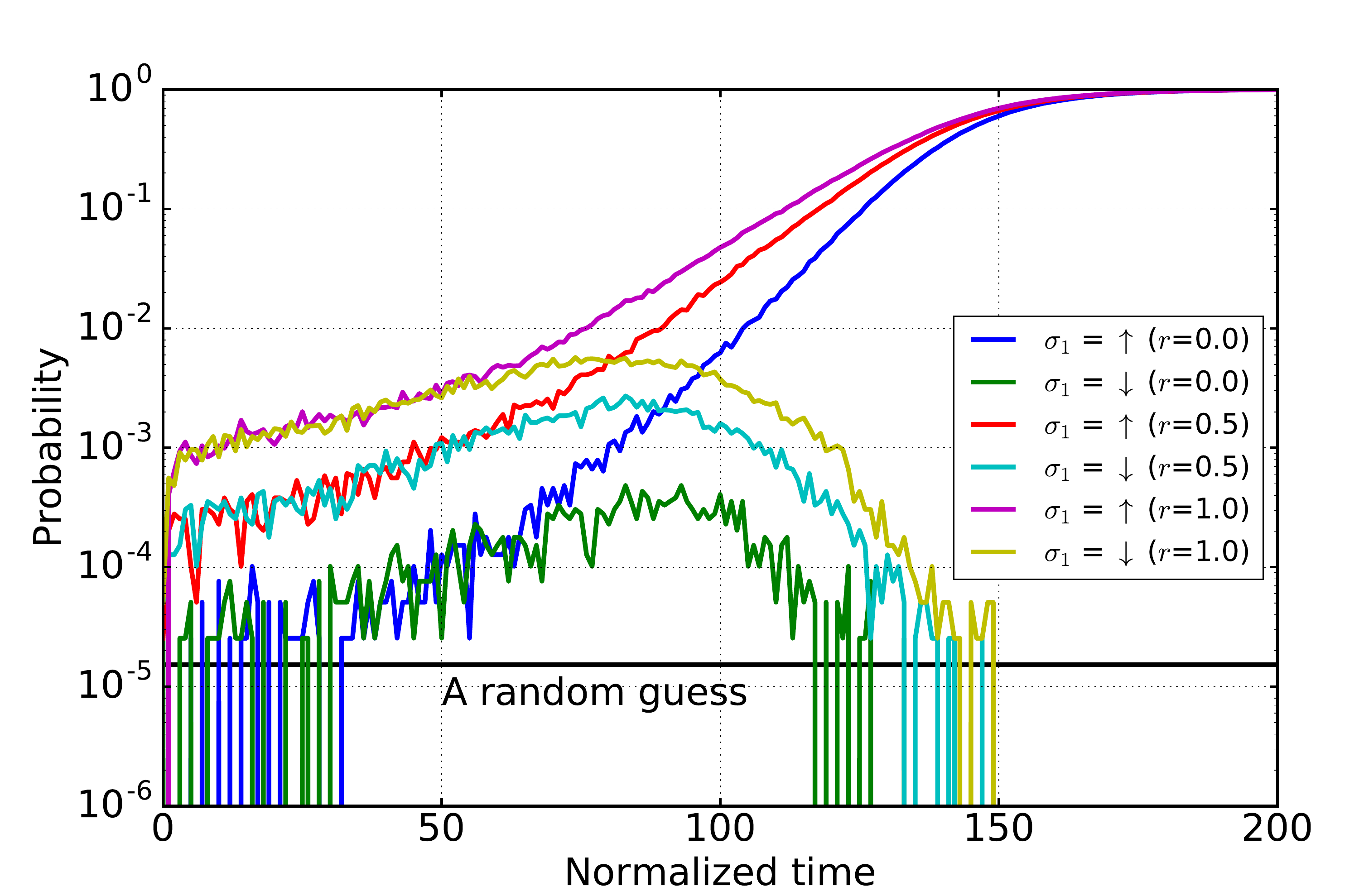}
\bigskip
\caption{Probabilities of finding the selected ground state$\ket{\uparrow \downarrow \uparrow \downarrow \uparrow \downarrow \uparrow \downarrow \uparrow \downarrow \uparrow \downarrow \uparrow \downarrow \uparrow \downarrow}$($\sigma_1 = \uparrow$) and the unselected ground state $\ket{\downarrow \uparrow \downarrow \uparrow \downarrow \uparrow \downarrow \uparrow \downarrow \uparrow \downarrow \uparrow \downarrow \uparrow \downarrow \uparrow}$($\sigma_1 = \downarrow$) for various squeezing parameter $r$. The system performs quantum filtering at $5<\tau<50$ and spontaneous symmetry breaking at $60<\tau<130$ after the very brief period of quantum parallel search at $0<\tau<5$. Finally the amplitude of electromagnetic field become large enough to measure the spins (quantum-classical crossover).}
\label{fig:searching}
\end{figure}

\clearpage
\begin{figure}[!hbp]
\centering
\includegraphics[clip=true,scale=0.5]{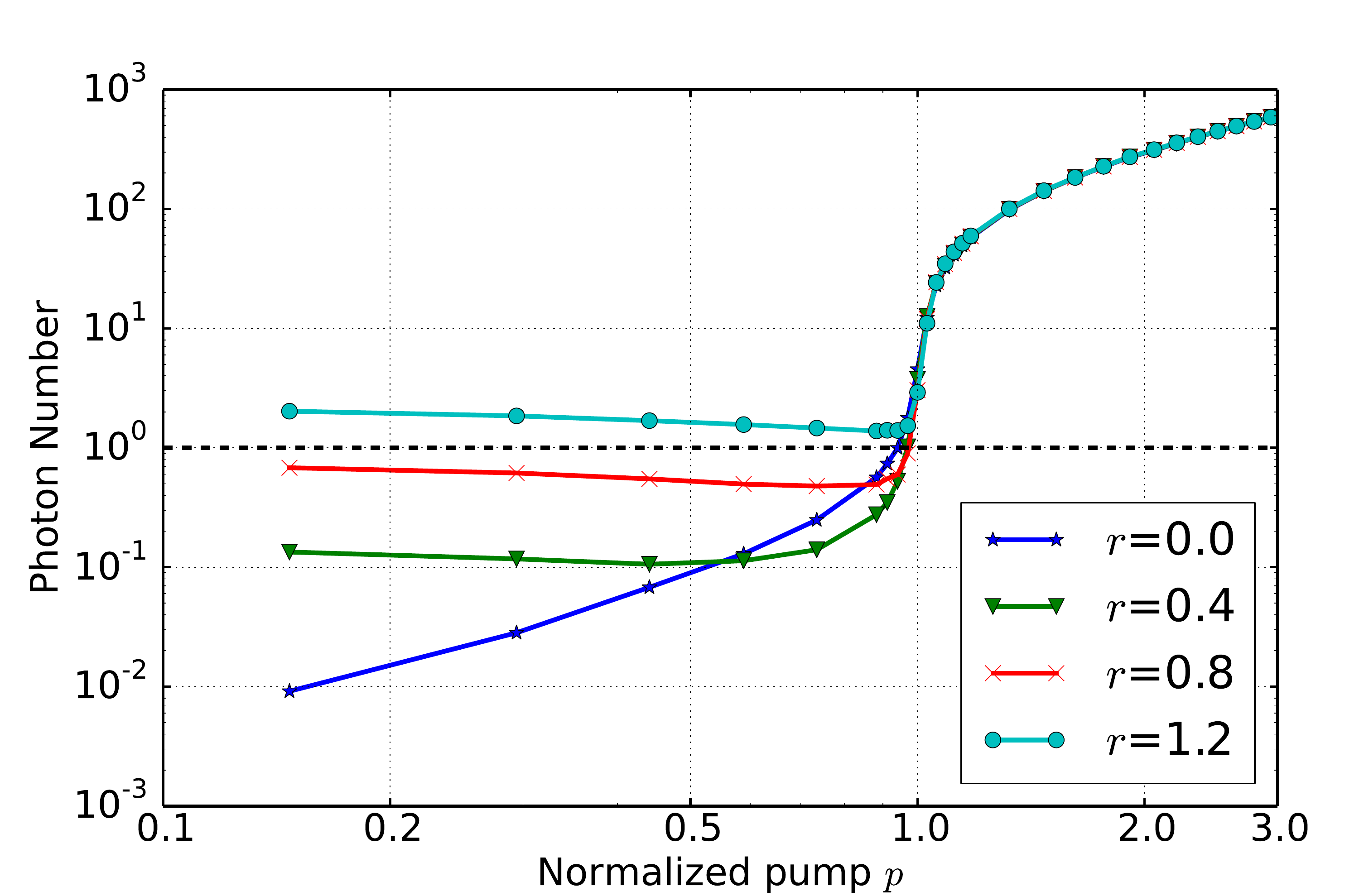}
\bigskip
\caption{Average photon number $\average{n}$ per DOPO at 2000 round trips after the pump is switched on vs.~normalized pump rate $p$ for different squeezing parameters $r$. }
\label{fig:pn}
\end{figure}

\clearpage
\begin{figure}[!hbp]
\centering
\begin{subfigure}{0.7 \textwidth}
\includegraphics[width=\linewidth]{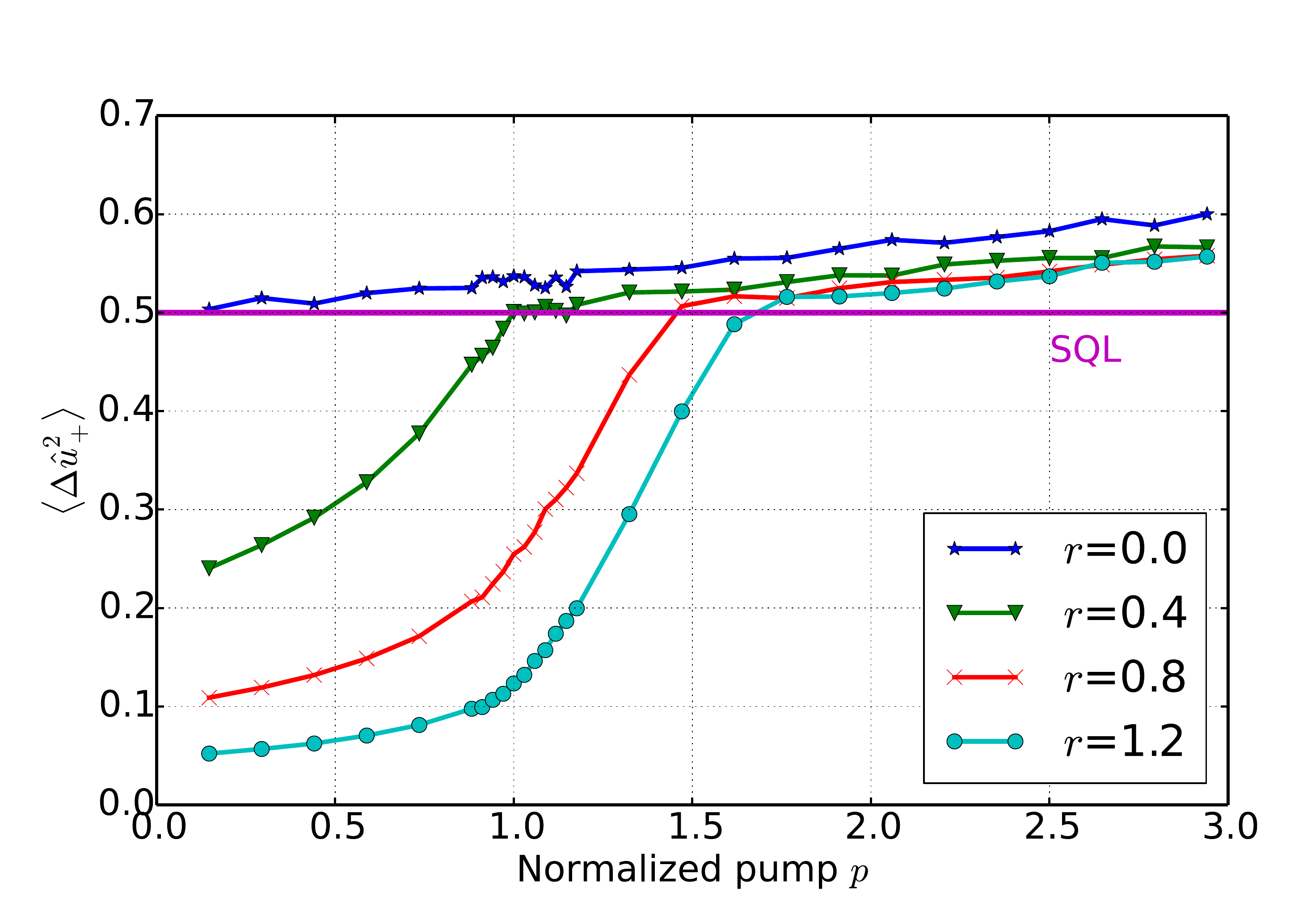}
\caption{}
\label{fig:dudu_min}
\end{subfigure}
\begin{subfigure}{0.7 \textwidth}
\includegraphics[width=\linewidth]{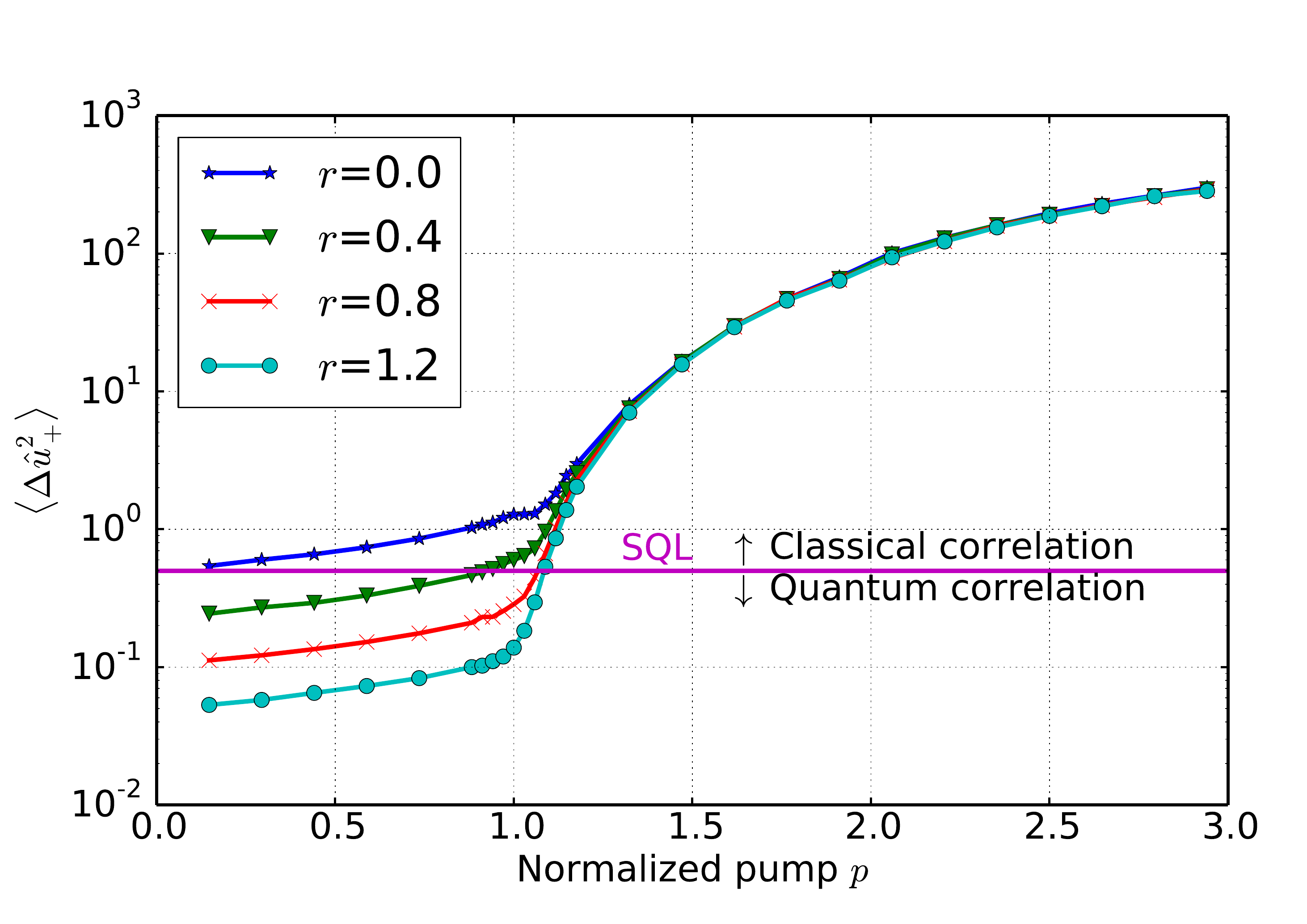}
\caption{}
\label{fig:dudu_fin}
\end{subfigure}
\caption{Variance of $\hat{u}_{+}$ between two neighboring spins vs.~normalized pump rate $p$ for different squeezing parameters  $r$: \\ (a) Minimum variance of $\hat{u}_{+}$ during 2000 round trips vs.~normalized pump rate $p$. \\ (b) Final variance of $\hat{u}_{+}$ after 2000 round trips vs.~normalized pump rate $p$.} 
\label{fig:dudu}
\end{figure}

\clearpage
\begin{figure}[!hbp]
\centering
\includegraphics[clip=true,scale=0.5]{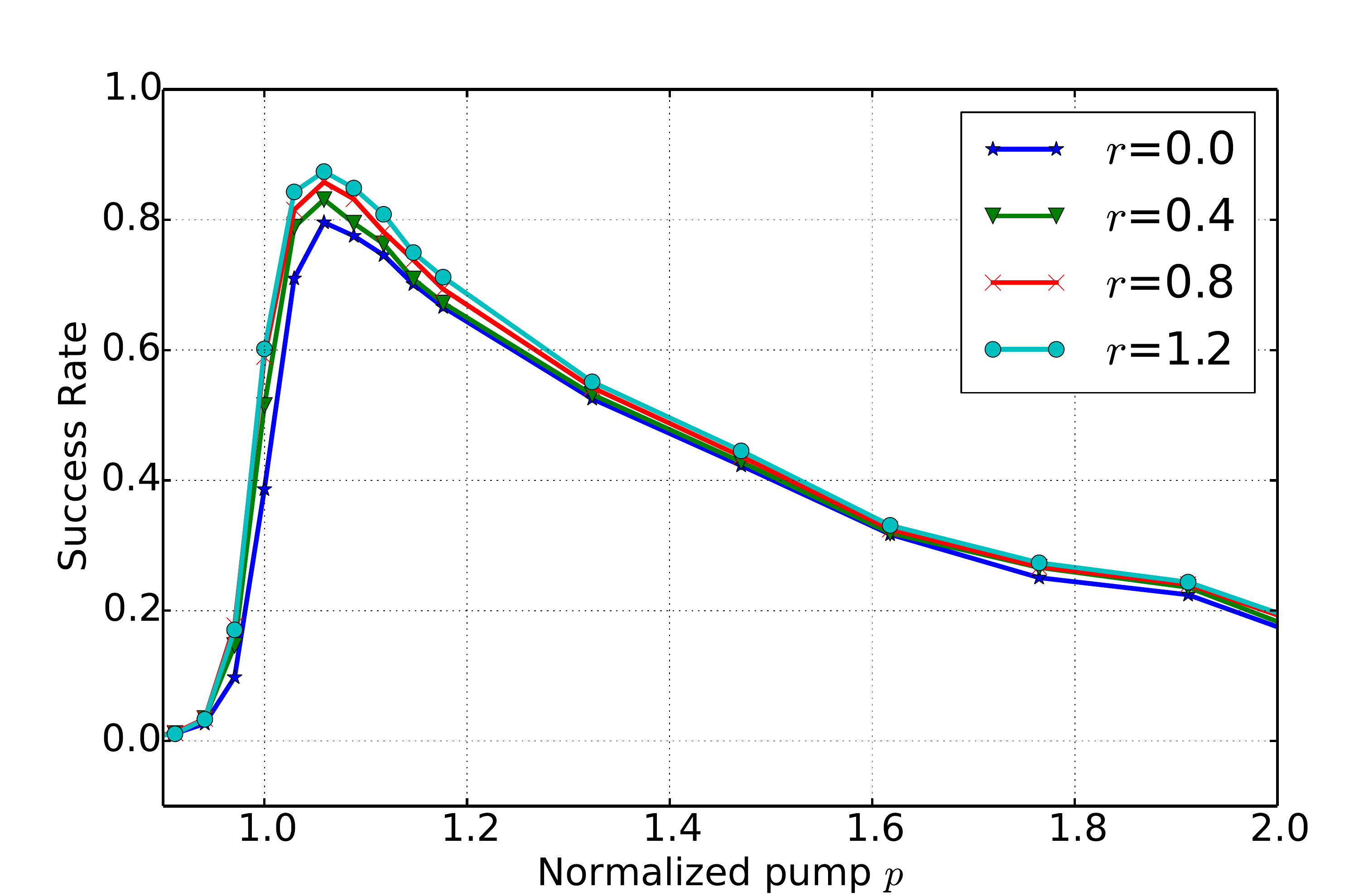}
\bigskip
\caption{Probability of finding a ground state after 2000 round trips vs.~normalized pump rate $p$ for different squeezing parameters  $r$.}
\label{fig:ans}
\end{figure}

\clearpage
\begin{figure}[!hbp]
\centering
\includegraphics[clip=true,scale=0.5]{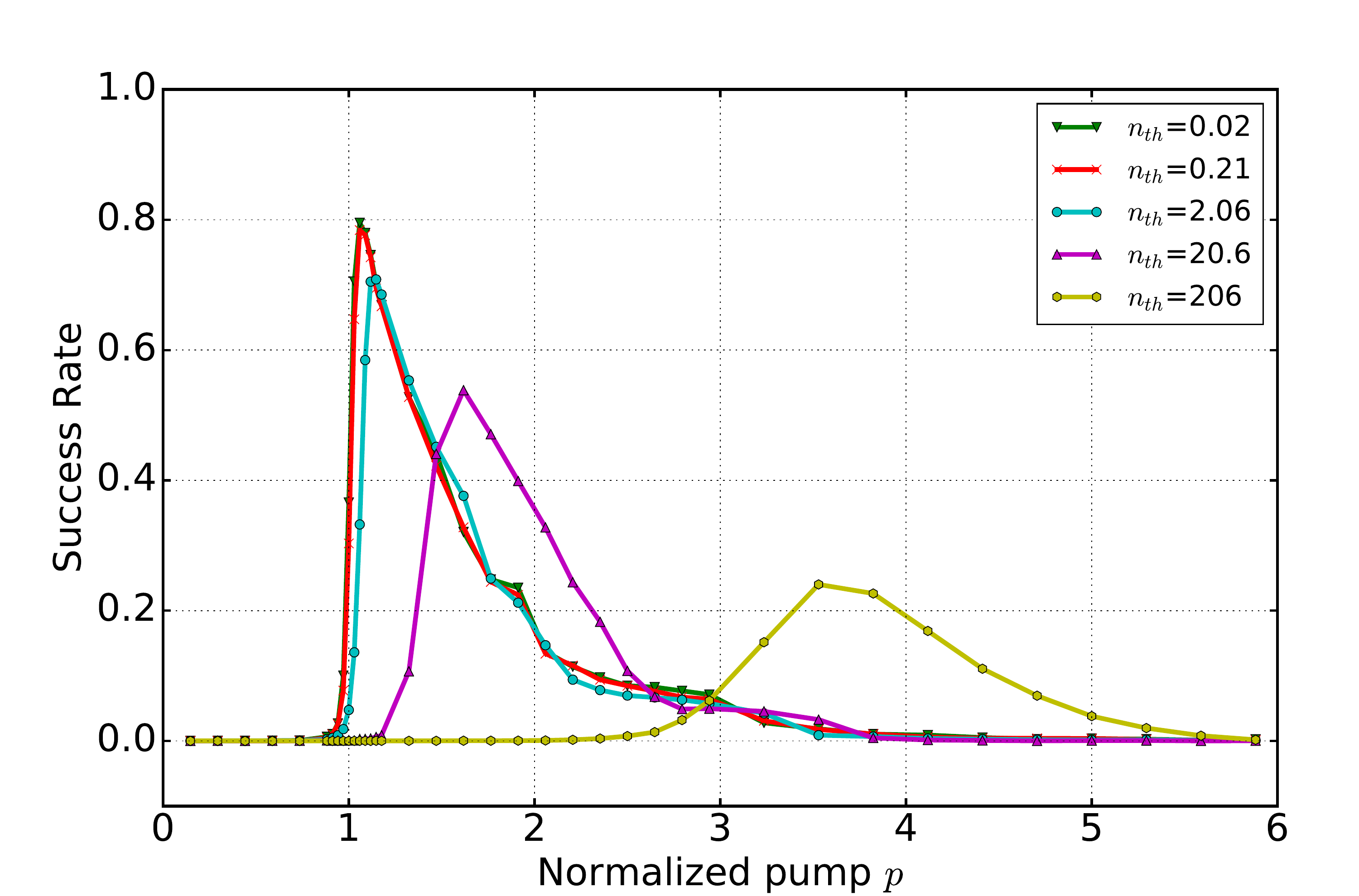}
\bigskip
\caption{Probability of finding a ground state after 2000 round trips vs.~the normalized pump rate $p$ for different thermal photon numbers  $n_{th}$.}
\label{fig:ans_TP}
\end{figure}

\clearpage
\begin{figure}[!hbp]
\centering
\includegraphics[clip=true,scale=0.5]{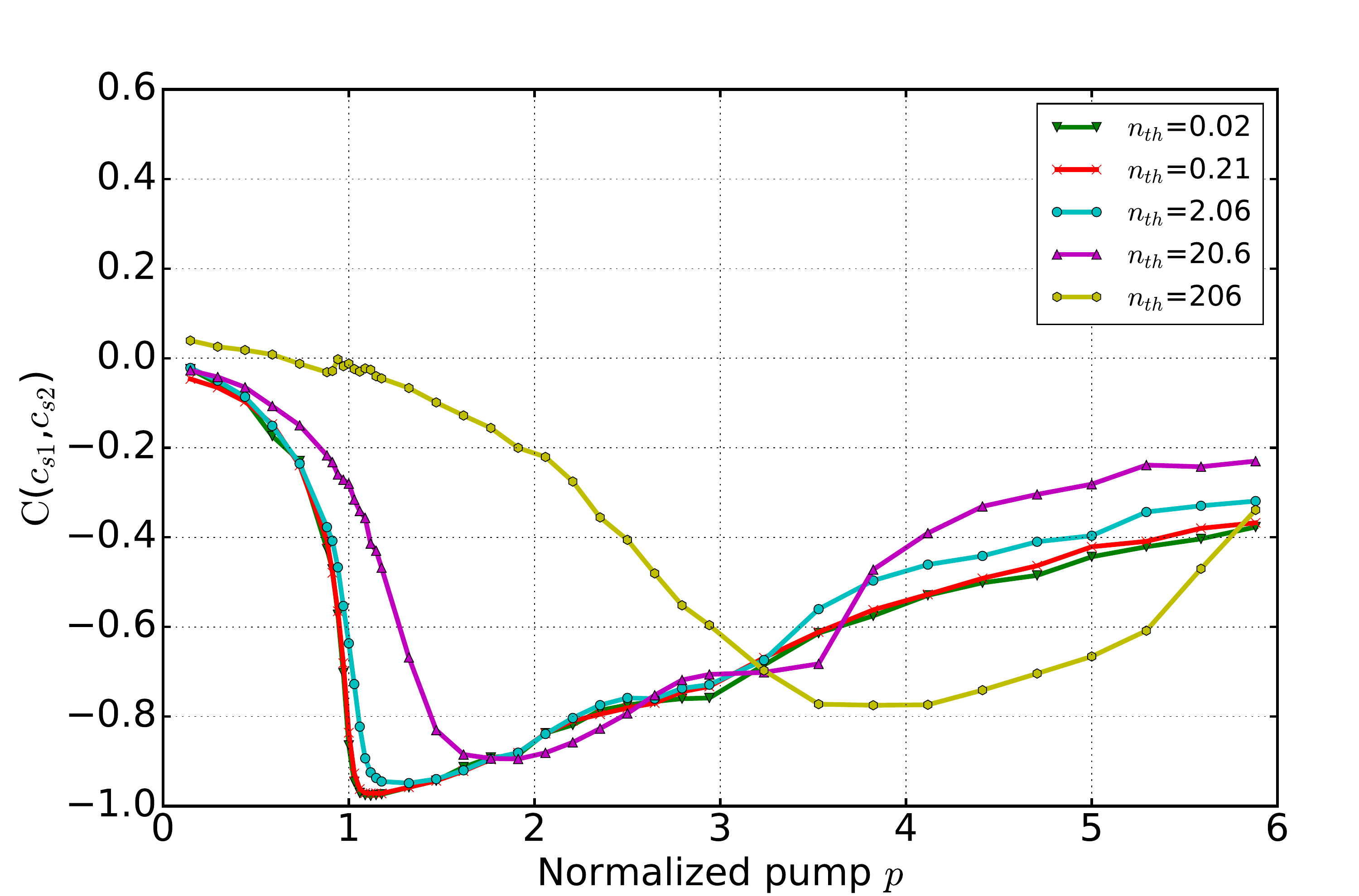}
\bigskip
\caption{The normalized correlation ${\rm C}(\hat{x}_{s1},\hat{x}_{s2})$ after 2000 round trips vs.~the normalized pump rate $p$ for different thermal photon numbers  $n_{th}$.}
\label{fig:cor_TP}
\end{figure}

\clearpage
\bibliographystyle{apsrev4-1}
\bibliography{submission.bib}

\end{document}